\newtheorem{remark}{Remark}
\newtheorem{proposition}{Proposition}%[section]
\newtheorem{proof}{Proof}
\newcommand{\col}{\mbox{col}}
\def\calg{{\cal G}}
\def\L2e{{\cal L}_{2e}}
\def\rea{\mathbb{R}}
\def\adj{\mbox{adj}}
\def\begequarr{\begin{eqnarray}}
\def\endequarr{\end{eqnarray}}
\def\begequarrs{\begin{eqnarray*}}
\def\endequarrs{\end{eqnarray*}}
\def\begarr{\begin{array}}
\def\endarr{\end{array}}
\def\begequ{\begin{equation}}
\def\endequ{\end{equation}}
\def\begequs{\begin{equation*}}
\def\endequs{\end{equation*}}
\def\lab{\label}
\def\begdes{\begin{description}}
\def\enddes{\end{description}}
\def\begenu{\begin{enumerate}}
\def\begite{\begin{itemize}}
\def\endite{\end{itemize}}
\def\endenu{\end{enumerate}}
\def\lef[{\left[\begin{array}}
\def\rig]{\end{array}\right]}
\def\begcen{\begin{center}}
\def\endcen{\end{center}}
\def\begrem{\begin{remark}\rm}
\def\endrem{\end{remark}}
\def\calm{{\cal M}}
\def\calf{{\cal F}}
\def\call{{\cal L}}
\def\calm{{\cal M}}
\def\calf{{\cal F}}
\def\rea{\mathbb{R}}
\begin{document}

\title{Identification of Photovoltaic Arrays' Maximum Power Extraction Point via Dynamic Regressor Extension and Mixing}

\author{Anton~Pyrkin$^{1,*}$ , 
Fernando Mancilla-David$^2$,\\
Romeo~Ortega$^3$,
Alexey~Bobtsov$^1$, 
Stanislav~Aranovskiy$^{4,1}$

\thanks{$^{*}$ Anton~Pyrkin is the corresponding author, a.pyrkin@gmail.com}
\thanks{$^{1}$ Anton~Pyrkin, Alexey~Bobtsov and Stanislav~Aranovskiy are with the Department of Control Systems and Informatics, ITMO University, St. Petersburg, Russia.}
\thanks{$^{2}$ Fernando Mancilla-David is with the Department of Electrical Engineering, University of Colorado, Denver, Colorado 80217, USA.}
\thanks{$^{3}$ Romeo~Ortega is with the Laboratoire des Signaux et Syst\`emes, CNRS-SUPELEC, Plateau du Moulon, 91192, Gif-sur-Yvette, France.}
\thanks{$^{4}$ Stanislav~Aranovskiy is with the NON-A team, INRIA-LNE, Parc Scientifique de la Haute Borne 40, avenue Halley Bat.A, Park Plaza, 59650 Villeneuve d'Ascq, France.}
}

\maketitle

\begin{abstract}
This paper deals with the problem of identification of photovoltaic arrays' maximum power extraction point---information that is encrypted in the current-voltage characteristic equation. We propose a new parameterisation  of the classical five parameter model of this function that, combined with the recently introduced identification technique of dynamic regressor extension and mixing, ensures a fast and accurate estimation of all unknown parameters. A concavity property of the current-voltage characteristic equation is then exploited to directly identify the desired voltage operating point. Realistic numerical examples via computer simulations are presented to assess the performance of the proposed approach.
\end{abstract}

\section{Introduction}

In this paper we are interested in the optimisation of the operation of photovoltaic (PV) cell arrays, which in most applications reduces to the extraction of its maximum power. This corresponds to operation of the PV array at a specific voltage where the static current--voltage (IV) characteristic attains its maximum---known in the literature as maximum power point (MPP). Clearly, knowledge of the IV curve is essential for the identification of the MPP.  Although the data sheet provided by the manufacturer contains some information about the IV curve it is well--known  that its shape depends on the temperature and the solar irradiance, therefore it is of interest to dispose of an {\em on--line} estimator of the graph that can track its variations and relies only on the classically available measurements, {\em i.e.}, current and voltage.

The classical IV curve model \cite{masters} depends on five parameters that enter nonlinearly into the function. This stymies the application of standard identification techniques \cite{lju} to estimate these parameters. The first contribution of this paper is to propose a new parameterisation of the IV characteristic, which combined with the recently introduced dynamic regressor extension and mixing (DREM) identification technique  \cite{DREM}, generates {\em consistent estimates} of these parameters. A second contribution is the introduction of an adaptive observer of the MPP extraction {\em voltage} that, applying certainty equivalence to the estimated IV curve model and exploiting a key concavity property of the power graph, directly estimates its maximum point.

The MPP tracking is a well-known problem \cite{keyhani, masters} and may be solved by an extremum seeking approach (see, {\em e.g.} \cite{krstic}). The proposed non-invasive identification-based MPP extraction approach should be contrasted with the extremum seeking-based techniques that require the injection of external signals that perturb the normal operation of the PV array and rely on the generation of undesirable oscillations around the MPP.    

\section{Current--Voltage Characteristic Model}

The circuital model for a single PV cell and its generalization to a number of cells in series is well established in terms of a current source, an anti parallel diode, a series resistance and a shunt resistance \cite{masters}. To be able to treat also cells connected in parallel we follow here the modified circuit  introduced in \cite{tian2012}, which replaces the anti parallel diode by an externally controlled current source, as illustrated in Fig. \ref{fig_pv}. 

\begin{figure}[ht]
\centering
\includegraphics[width=0.8\textwidth]{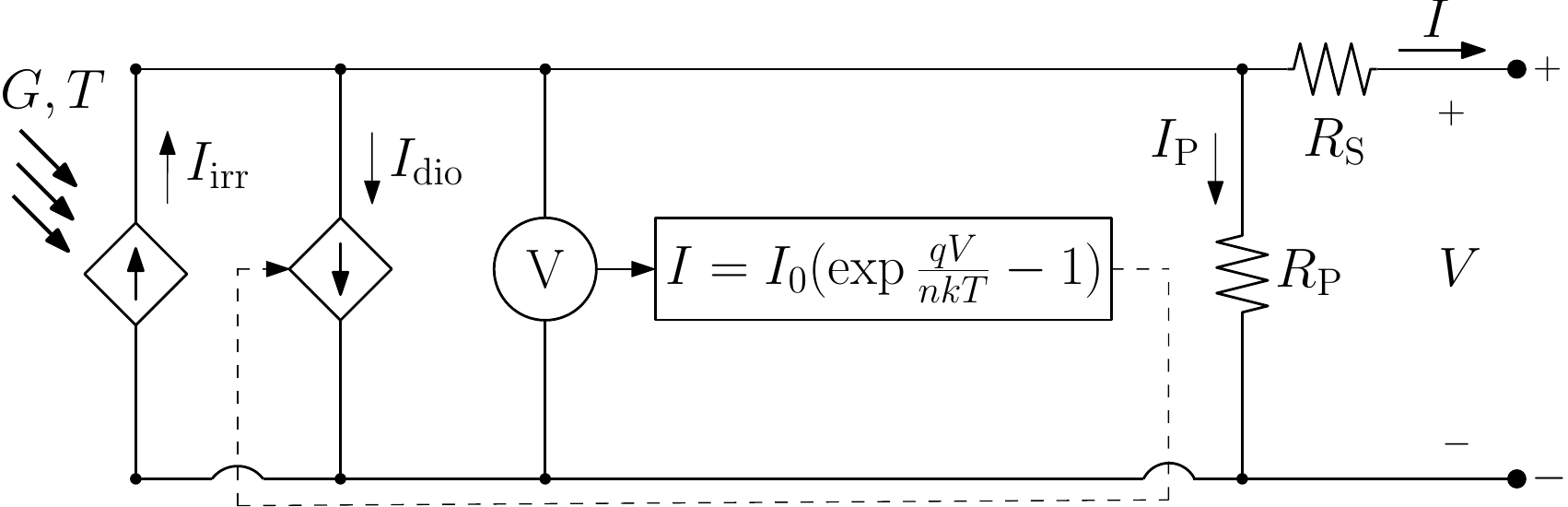}
\caption{Equivalent circuit for a single PV cell}
\vspace{4mm}
\label{fig_pv}
\end{figure}

The output current of the PV cell of Fig. \ref{fig_pv} may be expressed as,
\begin{align*}
I = I_\mathrm{irr} - I_\mathrm{dio} - I_\mathrm{p},
\end{align*}
where $I_\mathrm{irr}$ is the photo current or irradiance current generated when the cell is exposed to sunlight. $I_\mathrm{dio}$ is the current flowing through the anti-parallel diode and induces the nonlinear characteristics of the PV cell. $I_\mathrm{p}$ is a shunt current due to the shunt resistor $R_{P}$ branch. Substituting relevant expressions for $I_\mathrm{dio}$ and $I_p$ we get
\begin{align}
\lab{current1}
I = I_\mathrm{irr} - I_\mathrm{0} \left[\exp\left({q(V+IR_\mathrm{S})\over nkT}\right)-1\right]-{V+IR_\mathrm{S}\over R_\mathrm{P}},
\end{align}
where $q = 1.602\times 10^{-19} C$ is the electron's electric charge, $k = 1.3806503\times 10^{-23} J/K$ is the Boltzmann constant, $T$ is the temperature of the cell, $I_\mathrm{0}$ is the diode saturation current or cell reverse saturation current, $n$ is the ideality factor or the ideal constant of the diode, and $R_\mathrm{S}$ and $R_\mathrm{P}$ represent the series and shunt resistance, respectively \cite{masters}.

A PV power plant will usually contain a large number of cells in series, say $N_\mathrm{S}$, and in parallel, say $N_\mathrm{P}$, in order to match a particular level of voltage/power rating. For instance, if $0.5$ V$/2$ W PV cells are used as a building block to realize a $450$ kW power plant, $N_S$ and $N_P$ would be in the order of $1300$ and $180$, respectively. Those cells would in practice be grouped in panels---for example a panel may contain two parallel--connected strings made up of $36$ series-connected cells each. 

The model of a single cell is generalized to an arbitrary number $N_\mathrm{S} \times N_\mathrm{P}$ of cells connected in series and parallel to form an array giving the following final form \cite{tian2012} to the IV curve
\begin{align}
\lab{current2}
I &= N_\mathrm{P} I_\mathrm{irr} - N_\mathrm{P} I_\mathrm{0} \left[\exp\left({q(V+I{N_\mathrm{S}\over N_\mathrm{P}}R_\mathrm{S})\over N_\mathrm{S}nkT}\right)-1\right]
%\nonumber\\&\quad
-{V+I{N_\mathrm{S}\over N_\mathrm{P}}R_\mathrm{S}\over {N_\mathrm{S}\over N_\mathrm{P}}R_\mathrm{P}}.
\end{align}

Identification of the IV curve \eqref{current2} is stymied by the fact that the five unknown parameters ($I_\mathrm{irr}$, $I_\mathrm{0}$, $R_\mathrm{S}$, $R_\mathrm{P}$, $n$) enter {\em nonlinearly} into the function. Moreover, they change with the solar irradiance $G$, the cell temperature $T$, and certain reference parameters, namely, $I_\mathrm{irr,ref}$, $I_\mathrm{0,ref}$, $R_\mathrm{S,ref}$, $R_\mathrm{P,ref}$ and $n_\mathrm{ref}$, are measured at standard reference conditions, usually at $G_\mathrm{ref}=1000$ W/m$^2$ and $T_\mathrm{ref}=25^\mathrm{o}$C. See \cite{carrasco2014} for more details on these dependence and \cite{tian2012} where a methodology to derive the reference parameters  from information available on manufacturers' datasheets is proposed.

The IV curves described by \eqref{current2}  characterise the (static) performance of a PV array. Typical curves are illustrated in Fig. \ref{fig:IVcurves}. The figure shows the MPP, which is of special interest in applications as it allows to operate the PV array at its maximum efficiency for a given value of $G$ and $T$.  This is achieved calculating the voltage corresponding to the MPP, which may be utilized as a reference voltage for the power converter that links the PV array with the load or the power grid.

If $G$ and $T$ are measurable it is possible to use the reference parameters to track the MPP voltage. To avoid the need of additional measurements we propose in this paper to generate an on-line estimator of the IV curve that can, in principle, track the parameter variations using only measurements of voltage and current. Towards this end, a new parameterisation of the IV characteristic \eqref{current2} and a novel on--line estimator of these new parameters is proposed in the paper.  The estimator is designed using the DREM technique, which was recently proposed in  \cite{DREM} and is briefly described in Section \ref{sec5}. Using the estimated parameters we propose in Section \ref{sec7} an adaptive observer of the MPP extraction {\em voltage} that directly estimates the maximum point of the power graph.

\begin{figure}[ht]
\centering
\includegraphics[width=0.8\textwidth]{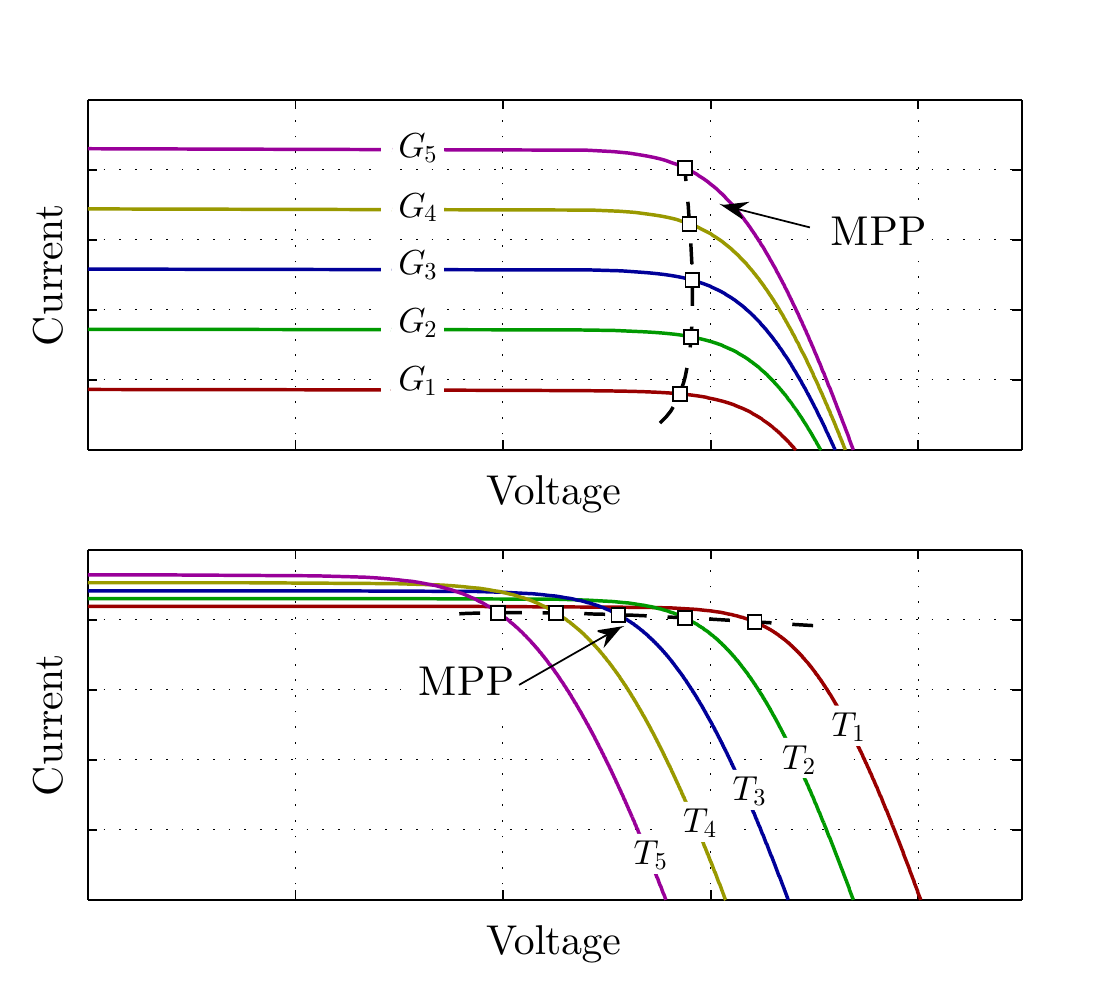}
\caption{Typical IV  performance curves. Top and bottom plots illustrate curves for different values of irradiance ($G_1<G_2<G_3<G_4<G_5$) and temperature ($T_1<T_2<T_3<T_4<T_5$), respectively.}\label{fig:IVcurves}
\end{figure}

\section{Formulation of Parameter Identification Problem}
\lab{sec3}

For the sake of simplicity, and without loss of generality, we assume the PV array is utilized to charge a battery bank through a boost dc/dc converter. Fig. \ref{fig:PowerStage} illustrates the architecture. 

\begin{figure}[ht]
\centering
\includegraphics[width=0.8\textwidth]{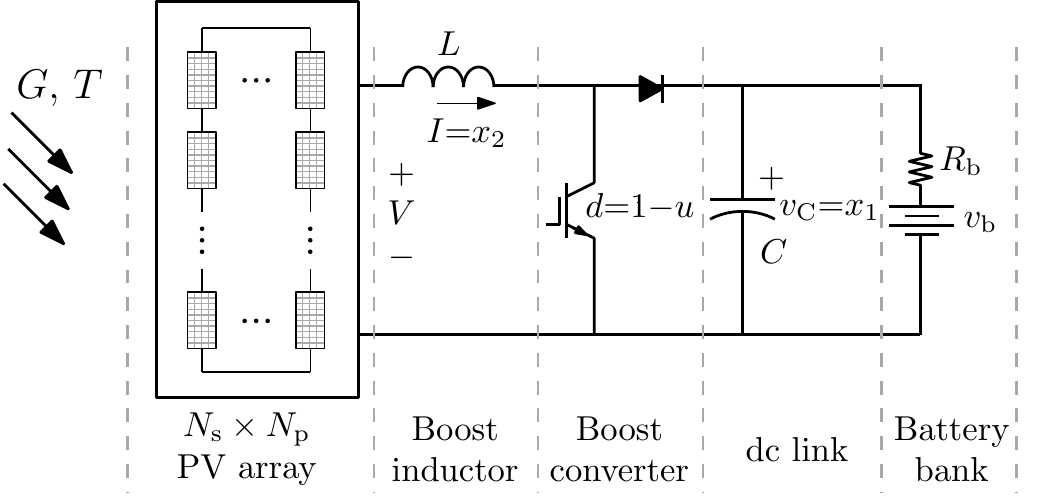}
\caption{Schematic of PV array charging a battery bank through a boost dc/dc converter. The voltage $V$ across the PV array is controlled by adjusting the transistor duty cycle, $d$.}\label{fig:PowerStage}
\end{figure}

The state space model for the system of Fig. \ref{fig:PowerStage} is given by the following---nonlinear and nonlinearly parameterised---differential--algebraic equations
\begin{eqnarray}
 \lab{x1}
      C\dot v_C & = & u\,I - {1\over R_\mathrm{b}}(v_C-v_\mathrm{b}),\\
 \lab{x2}
      L\dot I & = & -u v_C + V,\\
 \lab{I}
      I & = & a_1 - a_2\,\left(e^{a_3\,(V+a_4\,I)}-1\right)-a_5(V+a_4\,I),
\end{eqnarray}
with the following definitions:
\begite
\item[(i)]  $(v_C,I,V,u)$ are positive measurable signals, with $u=1-d$, where $d$ is the duty cycle that is chosen by the designer.
\item[(ii)] $L,C,R_\mathrm{b},v_\mathrm{b}  >0$ are known constants.
\item[(iii)] $a:=\col(a_1,...,a_5)$ are positive {\em unknown parameters}.
\endite

We make the important observation that, if the parameters $a$ are {\em known}, it is possible to compute the MPP voltage by evaluating the derivative (with respect to the voltage $V$) of the extracted power $VI$ and setting it equal to zero. In view of this observation the first step for the identification of the MPP voltage is the solution of the following parameter identification problem.

\vspace{6mm}
\noindent{\em Problem formulation.} 
Consider the system   \eqref{x1}--\eqref{I}, with $u$ free to the designer and such that all signals are bounded.  From measurements of $(v_C,I,V,u)$,  generate {\em consistent} on--line estimates of all  parameters $a$---that is, denoting the estimates as $\hat a$, they should verify
\begequs
%\lab{goal}
\lim_{t\to \infty} \left\| a-\hat a(t)\right\|=0,
\endequs
with $\|\cdot\|$ the Euclidean norm.
 
\begrem
It can be observed from Fig. \ref{fig:PowerStage} that by changing the transistor duty cycle one can adjust the voltage $V$ to ``sweep'' through the PV array's IV curve and dynamically generate a graph similar to the ones in Fig. \ref{fig:IVcurves}---for a given value of $G$ and $T$. Given the complicated dynamics of the system and the large number  of unknown parameters, it is clear that this excitation step is essential to generate consistent estimates.
\endrem

\begrem
An architecture similar to the one studied here has been considered in the interesting paper \cite{mezetal} where an adaptive PI controller, like the one proposed in \cite{heretal},  is designed to regulate the voltage.  It should be underscored that the IV curve model considered in  that paper is a much simpler three--parameter relation (equation (1)  in   \cite{mezetal}) that is valid only for a single cell PV. Moreover, due to the nonlinear dependence of the parameters, only two of them are actually identified with the third one assumed known.  
\endrem

\begrem
If the temperature $T$ is {\em known}, it is also possible to obtain the physical parameters  ($I_\mathrm{irr}$, $I_\mathrm{0}$, $R_\mathrm{S}$, $R_\mathrm{P}$, $n$) from knowledge of $a$. More precisely, 
there exists a bijective mapping $\calm_\mathrm{T}:\rea_+^5 \mapsto \rea_+^5$, parameterised by the temperature $T$, such that 
$$
(I_\mathrm{irr},I_\mathrm{0},R_\mathrm{S},R_\mathrm{P},n)=\calm_\mathrm{T}(a).
$$
Clearly, the mapping is defined as
$$
 \left[
\begin{array}{c}
I_\mathrm{irr}\\
I_\mathrm{0}\\
n\\
R_\mathrm{S}\\
R_\mathrm{P}
\end{array}
\right]=\left[
\begin{array}{c}
{1 \over N_\mathrm{P}}a_1\\ \vspace{1mm}
{1 \over N_\mathrm{P}}a_2 \\ \vspace{1mm}
{q\over N_\mathrm{S} kT} {1 \over a_3}\\ \vspace{1mm}
{N_\mathrm{P} \over N_\mathrm{S}} a_4 \\ \vspace{1mm}
{N_\mathrm{P}\over N_\mathrm{S}}{1 \over a_5}
\end{array}
\right],
$$
where we underscore the presence of $T$ in third right hand term.
\endrem

\section{Model Reparameterization}
\lab{sec4}

An essential step in the estimator design is to propose a {\em linear reparametrisation} of the IV function, as shown below.

\begin{proposition}\em
The system \eqref{x1}--\eqref{I} admits a linear regression model  
\begequ
 \lab{regmod}
y  =  \Omega^\top\theta+\varepsilon_t,
\endequ
where $ \Omega$ and $y$ are measurable signals, $\theta:=\col(\theta_1, \dots, \theta_5) $ is a vector of unknown, positive parameters and $\varepsilon_t$ is an exponentially decaying term. Moreover, there exists a surjective mapping $\calf:\rea_+^5 \mapsto \rea_+^5$ such that 
$$
\theta=\calf(a).
$$

\end{proposition}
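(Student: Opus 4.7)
The plan is to convert \eqref{I} into a linear regression via differentiation, substitution of the exponential, and low-pass filtering. First I would differentiate \eqref{I} with respect to time to obtain
\begin{equation*}
\dot I = -a_5(\dot V+a_4\dot I) - a_2 a_3 e^{a_3(V+a_4 I)}(\dot V+a_4\dot I),
\end{equation*}
and rearrange \eqref{I} itself as $a_2 e^{a_3(V+a_4 I)}=(a_1+a_2)-I-a_5(V+a_4 I)$. Substituting the second expression into the first eliminates the transcendental factor and yields the polynomial identity
\begin{equation*}
\dot I=-\bigl[a_5+a_3(a_1+a_2)-a_3 I-a_3 a_5(V+a_4 I)\bigr](\dot V+a_4\dot I)
\end{equation*}
in the signals $V,I,\dot V,\dot I$ and the five entries of $a$ alone.

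Next, I would expand the product on the right and collect the six monomial regressors $\dot V,\dot I,I\dot V,V\dot V,I\dot I,V\dot I$. Moving the $\dot I$-terms to the left and dividing through by the strictly positive quantity $1+a_4 a_5+a_3 a_4(a_1+a_2)$ brings the identity into the linear-in-parameters form
\begin{equation*}
\dot I+\theta_1\dot V-\theta_2 I\dot V-\theta_3 V\dot V-\theta_4 I\dot I-\theta_5 V\dot I=0,
\end{equation*}
where each $\theta_i$ is a quotient of sums of products of the positive entries of $a$, hence positive. These five explicit formulas define the map $\mathcal F$, and with the choice $y:=-\dot I$, $\Omega:=\col(\dot V,-I\dot V,-V\dot V,-I\dot I,-V\dot I)$ the identity already has the structure $y=\Omega^\top\theta$.

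To obtain \eqref{regmod} with $y$ and $\Omega$ actually measurable, I would pass both sides through a stable first-order filter $(p+\lambda)^{-1}$; the unknown initial states of the filter produce precisely the exponentially decaying residual $\varepsilon_t$. The products $V\dot V$ and $I\dot I$ are rewritten as $\tfrac12(V^2)'$ and $\tfrac12(I^2)'$ before filtering, while the dynamic identity $\dot I=(V-u v_C)/L$ taken from \eqref{x2} renders $V\dot I$ directly measurable, and hence $I\dot V=(IV)'-V\dot I$ as well. The step demanding the most care is the surjectivity claim for $\mathcal F:\rea_+^5\to\rea_+^5$: positivity of each $\theta_i$ is transparent, but one must verify that every positive $\theta$ admits a preimage, which I would do by inverting the five defining relations, extracting $a_4$ from the ratio $\theta_4/\theta_2$ (equivalently $\theta_5/\theta_3$) and then peeling off $a_3$, $a_5$ and the sum $a_1+a_2$, with the splitting between $a_1$ and $a_2$ determined by a single pointwise evaluation of the original algebraic relation \eqref{I}.
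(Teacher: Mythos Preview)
Your argument is essentially the paper's own proof: differentiate \eqref{I}, eliminate the exponential via the algebraic relation, collect the bilinear terms, divide by $1+a_4a_5+a_3a_4(a_1+a_2)$, and low-pass filter. Your $\theta$'s coincide with the paper's up to the harmless swap $\theta_1\leftrightarrow\theta_2$. The one genuine variation is the treatment of the unmeasurable term $I\dot V$: the paper invokes the Swapping Lemma to rewrite $[\lambda/(p+\lambda)](I\dot V)$, whereas you use the product-rule identity $I\dot V=(IV)'-V\dot I$ together with $\dot I=(V-uv_C)/L$ from \eqref{x2}; both are valid, and your route is slightly more elementary since the paper already exploits measurability of $\dot I$ elsewhere.

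On surjectivity you go further than the paper, which simply displays the formula for $\mathcal F$ and does not verify the claim. Note, however, that your proposed inversion---splitting $a_1$ and $a_2$ by evaluating \eqref{I} at a signal pair $(V,I)$---is not an argument for surjectivity of $\mathcal F:\rea_+^5\to\rea_+^5$ as a map of parameters; it is precisely the content of the paper's \emph{next} proposition, where the inverse $\mathcal G$ is allowed to depend on $(V,I)$. In fact the five components of $\theta$ satisfy the identity $\theta_2\theta_5=\theta_3\theta_4$ (paper's labeling: $\theta_1\theta_5=\theta_3\theta_4$), so the image of $\mathcal F$ is a four-dimensional subset of $\rea_+^5$ and strict surjectivity onto $\rea_+^5$ cannot hold; the statement should be read as ``there is a well-defined map $\mathcal F$'', which is all that the proof actually establishes.
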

\begin{proof}
First, rewrite equation \eqref{I} in the more compact form
$$
      I+a_5a_4\,I  =  a_1 + a_2 - a_2\,e^{a_3\,(V+a_4\,I)}-a_5\,V.
$$
Hence
\begequ
 \lab{I_base}
      I  =   b_1 - b_2\,e^{a_3\,(V+a_4\,I)}-b_3\,V=:F(V,I),
\endequ
where, to simplify the expressions, we introduced the dummy parameters
\begin{align}
\lab{b_i}
b_1:={a_1 + a_2\over 1+a_5a_4},\quad b_2:={a_2\over 1+a_5a_4}, \quad b_3:={a_5\over 1+a_5a_4}.
\end{align}
and, for later reference, we have defined the function $F(V,I)$.

Differentiating \eqref{I_base} with respect to time and using \eqref{x2} we get 
\begin{eqnarray}
      \dot I & = & - b_2a_3\,(\dot V+a_4\,\dot I)\,e^{a_3\,(V+a_4\,I)}-b_3\,\dot V\nonumber\\
      & = &  a_3\,(\dot V+a_4\,\dot I)\,(I-b_1+b_3\,V)-b_3\,\dot V\nonumber\\
      & = &  a_3\, \dot V I -a_3b_1\,\dot V+a_3b_3 \, V\dot V +a_3a_4\,I\dot I-a_3a_4b_1\,\dot I \nonumber \\ &\quad&
      +a_3a_4b_3\,V\dot I-b_3\,\dot V\nonumber
      \\
      & = & a_3\, \dot V I -(a_3b_1+b_3)\,\dot V+a_3b_3 \, V\dot V +a_3a_4\,I\dot I
      \nonumber\\&\quad&
      -a_3a_4b_1\,\dot I+a_3a_4b_3\,V\dot I,\nonumber
\end{eqnarray}
yielding
\begin{eqnarray}
 \lab{I_dot}
\dot I & = & { a_3 \over 1+ a_3a_4b_1}\, \dot V I -{a_3b_1+b_3 \over 1+ a_3a_4b_1}\,\dot V+{a_3b_3 \over 1+ a_3a_4b_1} \, V\dot V 
\nonumber\\&\quad&
+{a_3a_4 \over 1+ a_3a_4b_1}\,I\dot I-{a_3a_4b_3\over 1+ a_3a_4b_1}\,V\dot I \nonumber\\
& = & \theta_1\, \dot V I -\theta_2\,\dot V+\theta_3 \, V\dot V +\theta_4\,I\dot I-\theta_5\,V\dot I\nonumber\\
& = & \theta_1\, \dot V I \!-\!\theta_2\,\dot V\!+\!{\theta_3 \over 2}  {d\over dt} {V^2} \!+\!{\theta_4\over 2} {d\over dt} {I^2}\!-\!\theta_5\,V\dot I,
\end{eqnarray}
where we defined the positive constants
\begin{align}
\theta_1&:={ a_3 \over 1\!+\! a_3a_4b_1},
\theta_2:={a_3b_1+b_3 \over 1\!+\!a_3a_4b_1},
\theta_3:={a_3b_3 \over 1\!+\! a_3a_4b_1},
\theta_4:={a_3a_4 \over 1+ a_3a_4b_1},
\theta_5:={a_3a_4b_3\over 1+ a_3a_4b_1}.
\lab{theta45}
\end{align}

Now, apply to \eqref{I_dot} an operator of the form ${\lambda\over p+\lambda}$ where $p:={d \over dt}$ and $\lambda>0$ is a designer chosen constant, to get
\begin{eqnarray}
 \lab{I_dot2}
\left[\lambda\over p+\lambda\right]\dot I & = & \theta_1\, \left[\lambda\over p+\lambda\right](\dot V I) -\theta_2\,\left[\lambda\over p+\lambda\right]\dot V+{\theta_3\over 2} {\left[\lambda\over p+\lambda\right]}\,{d\over dt} V^2\nonumber\\
&\quad&  +\,{\theta_4\over 2}{\left[\lambda\over p+\lambda\right]}\,{d\over dt} I^2
%\nonumber\\&\quad&
-\theta_5\,{\left[\lambda\over p+\lambda\right]}(V\dot I)+\varepsilon_t,
\end{eqnarray}
where $\varepsilon_t$ is the generic notation for an exponentially decaying term stemming from the filters initial conditions.

To generate the regressor vector $\Omega$ we introduce five linear filters\footnote{Notice that the right hand side of \eqref{xi_45} is computable without differentiation using the right hand side of \eqref{x2}.}
\begin{align}
\lab{xi_123}
\xi_1&=\left[\lambda \over p+\lambda\right]\, I,
\quad
\xi_2=\left[- \lambda \over p+\lambda\right]\, V,
\quad
\xi_3={1\over 2}\left[\lambda \over p+\lambda\right]\, V^2,\\
\label{xi_45}
\xi_4&= {1\over 2}\left[\lambda \over p+\lambda\right]\,I^2,
\quad
\xi_5=\left[-\lambda \over p+\lambda\right]\, (V \dot I).
\end{align}
To deal with the first right hand term of \eqref{I_dot2}, which contains the unmeasurable signal $\dot V$, we recall the Swapping Lemma (see, {\em e.g.}, \cite{sastry_bodson}), which applied to $\xi_5$ in \eqref{xi_45} yields
\begin{align}
\left[\lambda \over p+\lambda\right]\,(\dot V I) & =
I \left[\lambda \over p+\lambda\right] \dot V %\nonumber\\&\quad
- \left[1 \over p+\lambda\right] \left( \dot I \left[\lambda \over p+\lambda\right] \dot V \right).
\lab{swap_lemma}
\end{align}
Replacing the filtered expressions above in  \eqref{I_dot2}, and using \eqref{swap_lemma}, we get
$$
\dot \xi_1  =  \theta_1 \Omega_1 +\theta_2\,\dot \xi_2+\theta_3 \, \dot\xi_3 +\theta_4\,\dot \xi_4+\theta_5\,\xi_5  +\varepsilon_t,
$$
where we defined the signal
$$
\Omega_1:=I \dot\xi_2 - \left[1 \over p+\lambda\right] \left( \dot I \dot\xi_2\right).
$$
Defining $y:=\dot \xi_1$ and the regressor vector
$$ 
\Omega:=\col(\Omega_1, \dot \xi_2, \dot\xi_3,\dot \xi_4, \xi_5),
$$
establishes the first claim.

The surjective mapping $\calf(\cdot)$ is defined replacing $b_1$ and $b_3$ of \eqref{b_i} in \eqref{theta45}:
\begequ
\begin{bmatrix} \theta_1 \\ \theta_2 \\ \theta_3 \\ \theta_4 \\ \theta_5 \end{bmatrix}={1\over 1+a_4a_5+a_3a_4(a_1+a_2)}\begin{bmatrix}a_3(1+a_4a_5)\\ a_5+a_3(a_1+a_2) \\ a_3a_5 \\ a_3a_4(1+a_4a_5) \\ a_3a_4a_5  \end{bmatrix}.\nonumber
\endequ
\end{proof}

\begrem
\lab{rem3}
It is easy to show that, if $\dot V$ is available for measurement, it is possible to construct a linear regression model similar to \eqref{regmod} that {\em does not} require the measurement of $v_C$, nor the knowledge of $L$. 
\endrem

\section{DREM Parameter Estimator}
\lab{sec5}

The next step is to propose an estimator for the parameters $\theta$ using the linear regression  \eqref{regmod}. Towards this end, we will use here the DREM technique proposed in \cite{DREM}. Our motivation to use DREM stems from the fact that convergence is established without imposing on the signals the stringent persistent excitation conditions which are required in standard gradient (or least--squares) estimators \cite{lju}. 

Following the DREM methodology we define an extended regression mapping introducing the scaled, delayed signals
\begequ
y_{f_j}(t)  :=  {\beta}\,y(t-d_j),\;\Omega_{f_j}(t)  :=  {\beta}\,\Omega(t-d_j),
\label{d_operator}
\endequ
where $d_j \in \rea_+,\;j=1,\dots,4$ {and $\beta \in \rea_+$ is an auxiliary constant that, as will be shown below, is a scaling factor that ameliorates the {\em numerical conditioning} in the DREM calculations}. From  \eqref{regmod} it is clear that
$$
y_{f_j} =\Omega^\top_{f_j} \theta,
$$
where here, and throughout the rest of the paper, we neglect the presence of the additive exponentially decaying terms $\varepsilon_t$. The interested reader is referred to \cite{DREM} for the analysis of the effect of these terms.

Piling up the original regression equation \eqref{regmod} with the $4$ new filtered regressions we construct the extended regressor system
\begequ
\label{YM}
Y_e = M_e \theta,
\endequ
where we defined $Y_e \in \rea^5,\;M_e \in \rea^{5 \times 5}$ as
\begequ
\lab{yama}
Y_e:=	\begin{bmatrix} y \\ y_{f_1} \\ y_{f_2} \\y_{f_3} \\ y_{f_{4}} \end{bmatrix},\;M_e:=\begin{bmatrix} \Omega^\top \\ \vspace{1mm} \Omega^\top_{f_1} \\ \vspace{1mm} \Omega^\top_{f_2} \\ \vspace{1mm} \Omega^\top_{f_3} \\ \vspace{1mm} \Omega^\top_{f_{4}} \end{bmatrix}.
\endequ
Premultiplying  \eqref{YM} by the adjunct matrix of $M_e$ we get $5$ scalar regressors of the form
\begequ
\lab{scareg}
Y_i = \Delta \theta_i
\endequ
with $i = 1,\dots 5$, where we defined the scalar function
\begequ
\lab{phi}
\Delta:=\det \{M_e\},
\endequ
and the vector $Y=\col(Y_1,\dots,Y_5) \in \rea^5$
\begequ
\lab{Y}
Y:= \adj\{M_e\} Y_e.
\endequ

The estimation of the parameters $\theta_i$ from the scalar regression form \eqref{scareg} can be easily carried out via
\begequ
\lab{decest}
\dot{\hat{\theta}}_i = -\gamma_i\Delta (\Delta \hat\theta_i - Y_i),
\endequ
with adaptation gains $\gamma_i>0$. From \eqref{scareg} it is clear that the latter equations are equivalent to
\[
\dot{\tilde{\theta}}_i = -\gamma_i \Delta^2 \tilde\theta_i,
\]
where $\tilde \theta_i:=\hat \theta_i - \theta_i$ are the parameter errors. Solving this simple scalar differential equation we conclude that
\begequ
\lab{equsta}
\lim_{t\to \infty} \tilde \theta_i(t)=0 \quad \Longleftrightarrow \quad \Delta(t) \notin \call_2.
\endequ

The derivations  above establish the following proposition.\\

\begin{proposition}\em
\lab{pro5.1}
Consider the $5$--dimensional linear regression \eqref{regmod}. Define the vector $Y_e$ and the matrix $M_e$ as given in \eqref{d_operator}--\eqref{yama}. The estimator \eqref{decest}, with $\Delta$ and $Y$ defined in \eqref{phi} and \eqref{Y}, respectively, verifies \eqref{equsta}.\\
\end{proposition}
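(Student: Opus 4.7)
My plan is to follow the reduction already sketched in the paragraphs preceding the statement and turn it into a self-contained argument in three short steps: (i) build the extended system and pass to a diagonal form via the adjunct identity, (ii) derive the error dynamics of the estimator \eqref{decest}, and (iii) solve the resulting scalar linear time-varying differential equation to read off the claimed iff condition.

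For step (i), I would start from the original regression $y=\Omega^\top\theta$ and observe that, because delay and scaling by $\beta$ are linear operators acting on both sides, the four delayed/scaled copies defined in \eqref{d_operator} also satisfy $y_{f_j}=\Omega_{f_j}^\top\theta$. Stacking these together with the original equation gives the algebraic identity $Y_e=M_e\theta$ in \eqref{YM}, with $Y_e,M_e$ as in \eqref{yama}. The key algebraic step is then to premultiply by $\adj\{M_e\}$ and invoke $\adj\{M_e\}\,M_e=\det\{M_e\}\,I_5=\Delta\,I_5$, which is valid for every square matrix regardless of invertibility. Reading off the $i$th row yields the scalar regression $Y_i=\Delta\,\theta_i$ of \eqref{scareg}, with $Y$ and $\Delta$ as in \eqref{Y} and \eqref{phi}.

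For step (ii), I would substitute $Y_i=\Delta\,\theta_i$ into the estimator \eqref{decest}, giving
\begin{equation*}
\dot{\hat\theta}_i=-\gamma_i\,\Delta\bigl(\Delta\hat\theta_i-\Delta\theta_i\bigr)=-\gamma_i\,\Delta^2\,\tilde\theta_i.
\end{equation*}
Since the true parameters $\theta_i$ are constant, $\dot{\tilde\theta}_i=\dot{\hat\theta}_i$, so the error dynamics decouple into five independent scalar linear time-varying equations $\dot{\tilde\theta}_i=-\gamma_i\Delta^2(t)\,\tilde\theta_i$. This is the central benefit of the DREM reduction and is what makes the convergence analysis immediate.

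For step (iii), each scalar equation integrates explicitly to
\begin{equation*}
\tilde\theta_i(t)=\tilde\theta_i(0)\exp\!\Bigl(-\gamma_i\!\int_0^t\!\Delta^2(s)\,ds\Bigr),
\end{equation*}
so $\tilde\theta_i(t)\to0$ as $t\to\infty$ if and only if $\int_0^\infty\Delta^2(s)\,ds=\infty$, which is precisely the statement $\Delta\notin\mathcal{L}_2$. This establishes \eqref{equsta} and completes the proof. I do not anticipate a serious obstacle: the only point requiring care is the tacit neglect of the exponentially decaying term $\varepsilon_t$ inherited from \eqref{regmod}, which propagates into $Y_e$ and hence into $Y$ as a still exponentially decaying perturbation of the scalar regressions; it enters the error equation additively and does not alter the iff conclusion, as formalised in \cite{DREM}.
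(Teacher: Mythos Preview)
Your proposal is correct and follows essentially the same approach as the paper: the paragraphs preceding the proposition already carry out exactly your three steps---stacking the delayed copies into $Y_e=M_e\theta$, premultiplying by $\adj\{M_e\}$ to obtain the scalar regressions, and integrating the decoupled error equations $\dot{\tilde\theta}_i=-\gamma_i\Delta^2\tilde\theta_i$ to the exponential formula that yields the $\Delta\notin\mathcal{L}_2$ equivalence. Your remark on the $\varepsilon_t$ term is likewise in line with the paper's explicit decision to neglect it and defer the rigorous treatment to \cite{DREM}.
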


\begin{figure*}[ht]
\centering
	\subcaptionbox{\label{fig41} Current $I(t)$}{\includegraphics[width=0.32\textwidth]{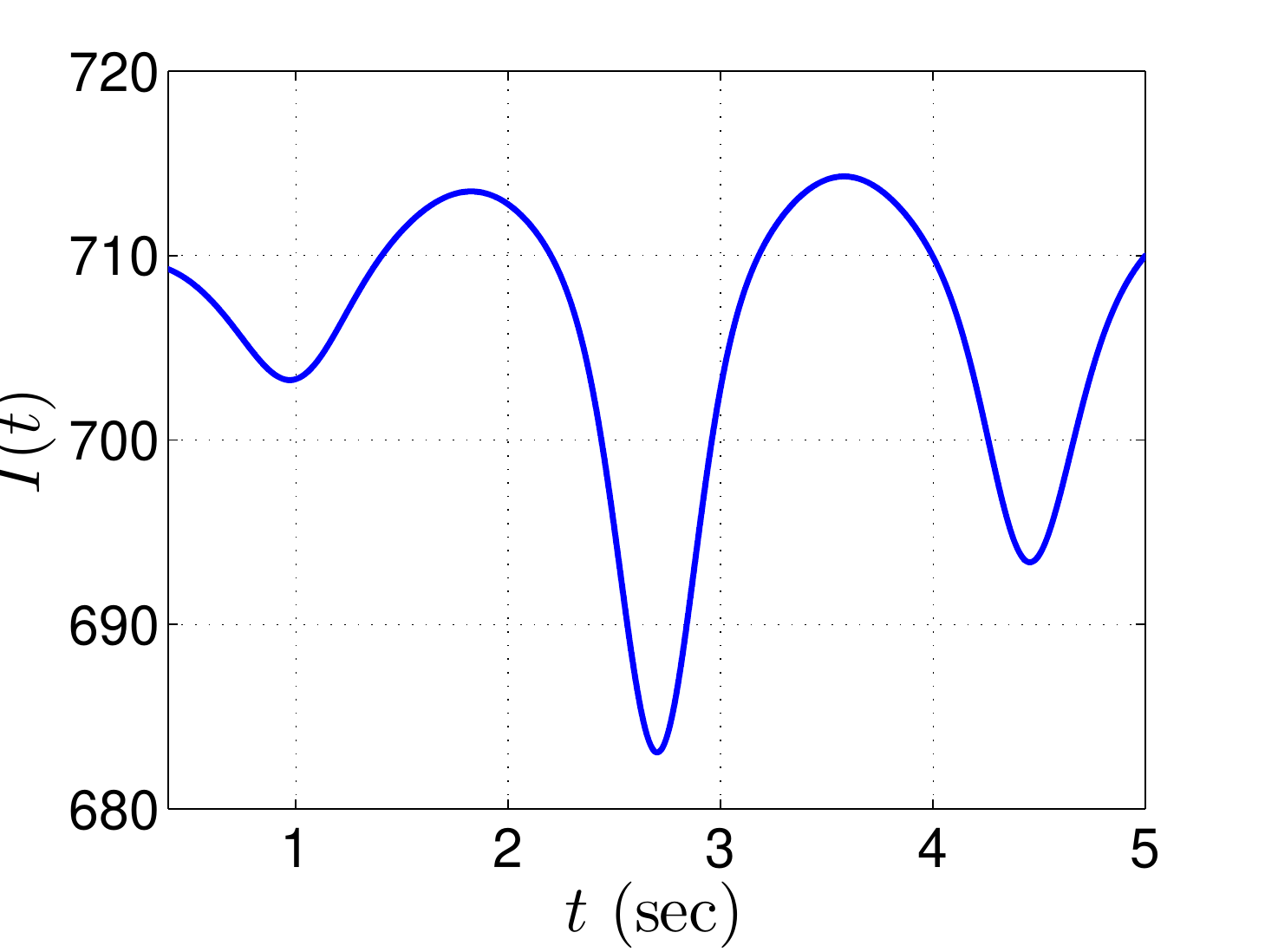}}
%	\!
	\subcaptionbox{\label{fig42} Voltage $V(t)$}{\includegraphics[width=0.32\textwidth]{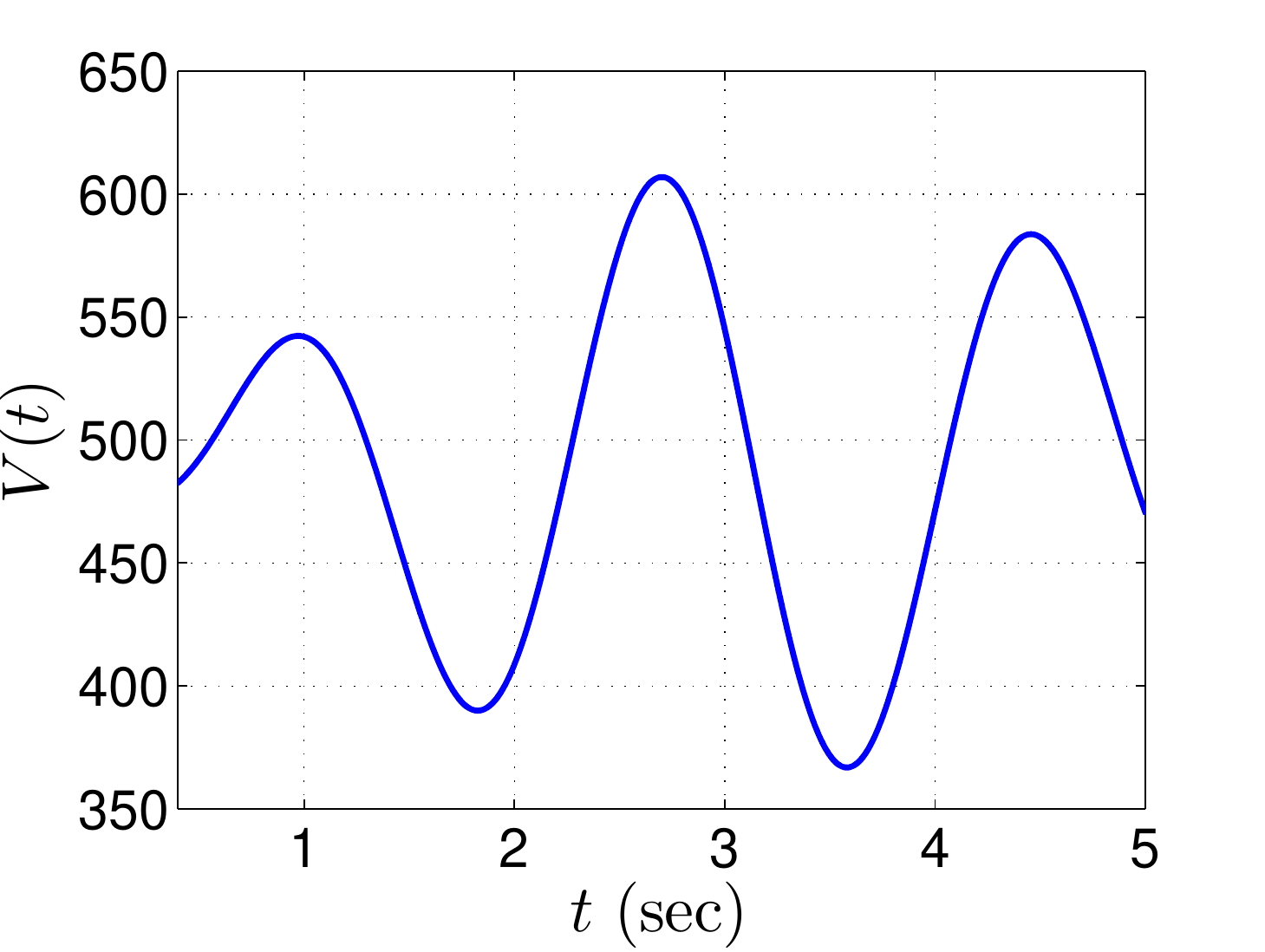}}
%\!
	\subcaptionbox{\label{fig43} Control $u(t)$}{\includegraphics[width=0.32\textwidth]{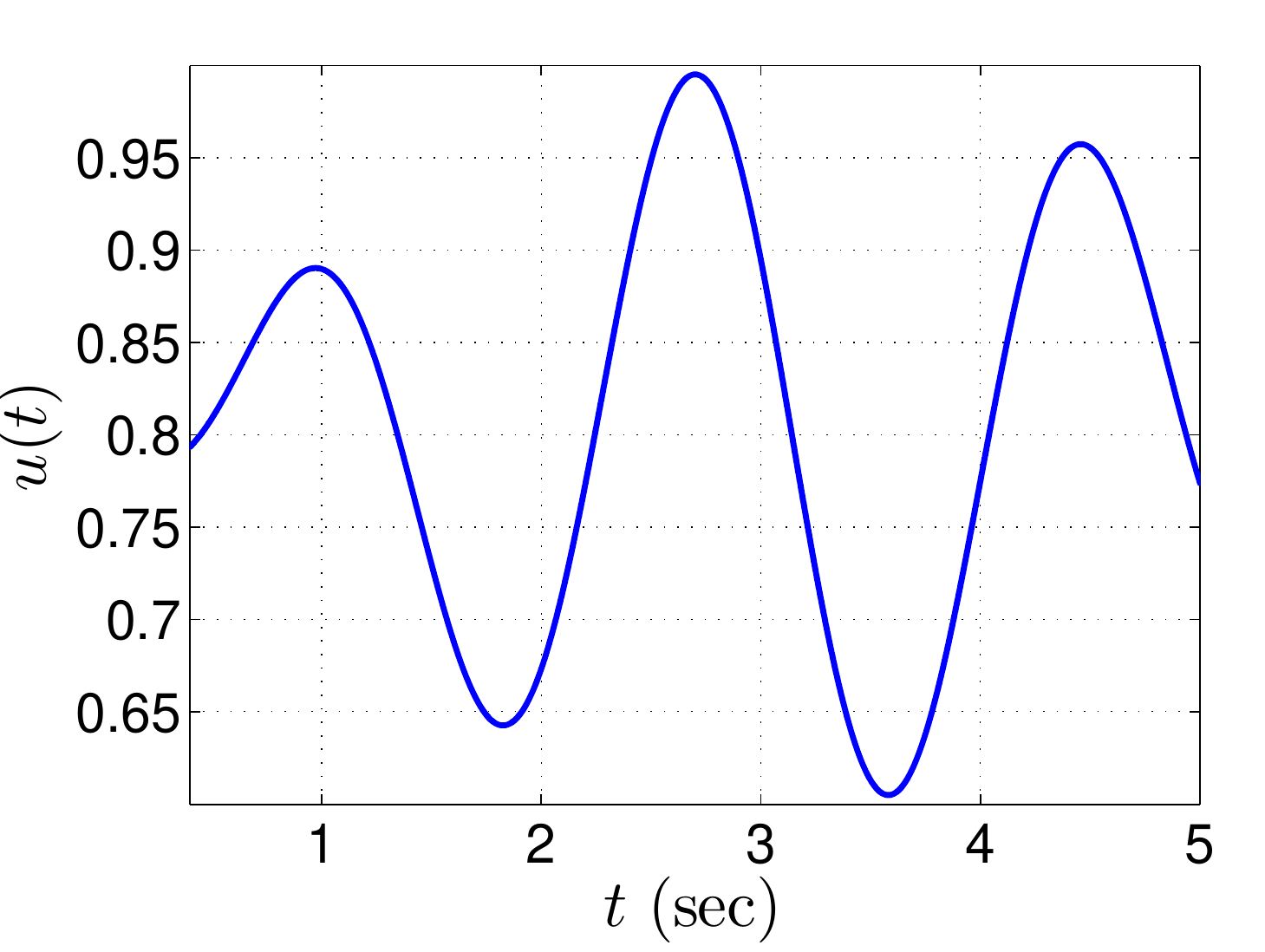}}
	\caption{\label{fig4} Current, voltage and control profiles.}
\end{figure*}

\begrem
In \cite{DREM} examples of regressors, which are not persistently exciting but satisfy the non--square integrability condition \eqref{equsta} are reported.
\endrem

\begrem
In the general formulation of DREM it is possible to use other linear operators different from the pure delay ones proposed here. The interested reader is referred to \cite{DREM} for further details on DREM.
\endrem

\section{A Simpler Parameterization}
\lab{sec6}

Interestingly, it is shown below that only four elements of $\theta$ are required to  reconstruct all the parameters $a$.
 
\begin{proposition}\em
\label{pro6.1}
There exists a surjective mapping $\calg:\rea^6 \mapsto \rea^5$ such that\footnote{A similar result may be established for $(\theta_1, \theta_2, \theta_4, \theta_5)$ but is omitted for brevity.}
$$
a=\calg(\theta_1, \theta_2, \theta_3, \theta_4,V,I).
$$
\end{proposition}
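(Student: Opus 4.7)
My plan is to construct $\calg$ by inverting the intermediate parameterisation of \eqref{theta45} and \eqref{b_i} in two stages: first extract $a_3,\,a_4,\,b_1,\,b_3$ from $(\theta_1,\theta_2,\theta_3,\theta_4)$ alone, and then use the measured pair $(V,I)$ as one extra algebraic constraint, via the original IV equation \eqref{I_base}, to disentangle the remaining physical parameters $a_1,\,a_2,\,a_5$.

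Two of the four intermediate parameters fall out directly from \eqref{theta45} as ratios, $a_4 = \theta_4/\theta_1$ and $b_3 = \theta_3/\theta_1$. The other two come from the remaining pair of relations: using $\theta_1 a_4 = \theta_4$, the equation for $\theta_1$ rearranges to $a_3 \theta_4 b_1 = a_3 - \theta_1$, and the equation for $\theta_2$ gives $\theta_1 b_1 = \theta_2 - \theta_3/a_3$. Eliminating $b_1$ between these two collapses the system to a single linear equation in $a_3$ with explicit solution
\[
a_3 = \frac{\theta_1^2 - \theta_3\theta_4}{\theta_1 - \theta_2\theta_4},
\]
after which $b_1 = (a_3 - \theta_1)/(a_3\theta_4)$ is immediate. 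A short direct check using the original formulae shows $\theta_1 - \theta_2\theta_4 = a_3(1 - a_4 b_3)/(1 + a_3 a_4 b_1)^2$ with $1 - a_4 b_3 = 1/(1 + a_4 a_5) > 0$, so the denominator is sign-definite on the physical parameter set and the inversion is legitimate. Solving $b_3(1+a_4 a_5)=a_5$ then yields $a_5 = b_3/(1 - a_4 b_3)$, well defined for the same reason.

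To separate $a_1$ and $a_2$, I bring in the measured sample $(V,I)$ directly: since every other coefficient is by now known, the IV equation \eqref{I_base} is a single scalar relation determining
\[
b_2 = \frac{b_1 - b_3 V - I}{\exp\bigl(a_3(V + a_4 I)\bigr)},
\]
and inverting \eqref{b_i} gives $a_2 = b_2(1 + a_4 a_5)$ and $a_1 = b_1(1 + a_4 a_5) - a_2$. Assembling these steps defines $\calg$, and surjectivity onto the admissible parameter set is automatic: for any $a\in\rea_+^5$, taking $(\theta_1,\ldots,\theta_4)$ from \eqref{theta45} and any pair $(V,I)$ on the corresponding IV curve produces a preimage.

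The main obstacle I anticipate is the algebraic elimination leading to the formula for $a_3$: it is crucial that after substituting $a_4=\theta_4/\theta_1$ and $b_3 = \theta_3/\theta_1$ the two remaining equations become linear in $a_3$, not quadratic as a naive substitution might first suggest, and that the denominator $\theta_1-\theta_2\theta_4$ is sign-definite on the physical parameter set. Once that is verified, the remaining inversions in the two IV-based steps are elementary.
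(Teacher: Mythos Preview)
Your proof is correct and follows essentially the same two--stage strategy as the paper: recover $(a_3,a_4,b_1,b_3)$ algebraically from $(\theta_1,\theta_2,\theta_3,\theta_4)$, obtain $a_5$ from $b_3,a_4$, and then use the measured IV sample to split $a_1$ and $a_2$. Your elimination producing $a_3=(\theta_1^2-\theta_3\theta_4)/(\theta_1-\theta_2\theta_4)$ coincides with the paper's final expression \eqref{G_a345}, and your added check that $\theta_1-\theta_2\theta_4$ is sign--definite on the physical parameter set is a nice point the paper leaves implicit.
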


\begin{proof}
Using  \eqref{theta45} we get
\begin{align*}
%\lab{b5_a4}
a_4={\theta_4\over\theta_1},
\qquad
b_3={\theta_3\over\theta_1},
\end{align*}
and from the expressions for $\theta_1$ and $\theta_2$ we obtain
\begin{align*}
%\lab{b1_a3}
a_3={\theta_1(1-a_4b_3)\over 1+\theta_2a_4},
\qquad
b_1={b_3+\theta_2\over \theta_1(a_4b_3-1)},
\end{align*}
From \eqref{b_i} we have $a_5={b_3\over 1-a_4b_3}$ and
\begin{align}
\lab{a12}
a_{1}+a_2=b_1 (1+a_4a_5),
\end{align}
On the other hand, from \eqref{I} we have
$$
      -I + (a_1+a_2) -a_5 (V+a_4I)  =   a_2\,e^{a_3\,(V+a_4\,I)},
$$
from which, replacing \eqref{a12}, we get
$$
a_2=(-I +b_1 (1+a_4a_5) - a_5 (V+a_4I))\,e^{-a_3\,(V+ a_4\,I)}.
$$
With \eqref{a12} and the latter equation one can find the last parameter $a_1$. The final mapping $\calg$ takes the form
\begequarr
\lab{G_a1}
a_1&=&-\left[{\theta_1^2}I
+{\theta_1\theta_3}V
+{\theta_1^2(\theta_1\theta_2-\theta_3)\over
(\theta_3\theta_4-\theta_1^2)}\right]
{e^{{\theta_3\theta_4-\theta_1^2\over\theta_1-\theta_2\theta_4}
(V+{\theta_4\over\theta_1}I)}
\over\theta_3\theta_4-\theta_1^2}+{\theta_1^2(\theta_1\theta_2-\theta_3)\over
(\theta_3\theta_4-\theta_1^2)^2},\\
\lab{G_a2}
a_2&=&\left[{\theta_1^2}I
+{\theta_1\theta_3}V
+{\theta_1^2(\theta_1\theta_2-\theta_3)\over
(\theta_3\theta_4-\theta_1^2)}\right]
{e^{{\theta_3\theta_4-\theta_1^2\over\theta_1-\theta_2\theta_4}
(V+{\theta_4\over\theta_1}I)}
\over\theta_3\theta_4-\theta_1^2},\\
\lab{G_a345}
a_3&=&{\theta_3\theta_4-\theta_1^2\over\theta_2\theta_4-\theta_1},\qquad
a_4={\theta_4\over\theta_1},
\qquad
a_5={\theta_1\theta_3\over{\theta_1^2-\theta_3\theta_4}}.
\endequarr
completing the proof.
\end{proof}

\section{Direct Identification of the MPP Voltage}
\lab{sec7}

To compute the MPP voltage from the knowledge of the IV characteristic we can proceed as follows. First, we compute the power extracted from the PV array as
$$
      P  =  V\,I=VF(V,I),
$$
where we have used \eqref{I_base} to obtain the second equation. Now, define the function  
$$
h(V,I):={\partial P\over\partial V},
$$
that may be computed as
$$
      h(V,I) = F(V,I)+V\,{\partial F(V,I)\over\partial V}= b_1 - (b_2+b_2a_3V)\,e^{a_3\,(V+a_4\,I)}-2b_3\,V.
$$

At the MPP we have $h(V_{\tt MPP},I_{\tt MPP})=0$. Combining this identity with \eqref{I_base} evaluated at the MPP yields
\begequ
\lab{I_MPP}
I_{\tt MPP}= {2b_3\,V_{\tt MPP}-b_1\over 1+a_3V_{\tt MPP}}+b_1-b_3\,V_{\tt MPP}=:g(V_{\tt MPP}).
\endequ
The MPP voltage is then computed defining the function
\begequ
\lab{H}
H(V_{\tt MPP}):= h(V_{\tt MPP},g(V_{\tt MPP}))=  b_1 - (b_2+b_2a_3 V_{\tt MPP})\,e^{a_3\,\left(V_{\tt MPP}+a_4\,\left({2b_3\, V_{\tt MPP}-b_1\over 1+a_3 V_{\tt MPP}}+b_1-b_3\, V_{\tt MPP}\right)\right)}-2b_3\, V_{\tt MPP}
\endequ
and solving for $V_* \in \rea_+$ the implicit equation $H(V_*)=0$.\footnote{Notice that to avoid cluttering the notation we have relabelled the MPP voltage $V_*$.}  

The proposition below shows that the last step, which involves some far from obvious calculations, can be carried out in a recursive manner exploiting the concavity of the power function evaluated at the MPP current.

\begin{proposition}\em
\lab{prop_mpp}
Consider the function \eqref{H}. The adaptation algorithm 
\begin{align}
\lab{mpp_est}
\dot{\hat V}_{*}=\gamma_V\,H(\hat V_{*}).
\end{align}
guarantees $\lim_{t\rightarrow\infty}\hat V_{*}(t)=V_{*}$ for all constants $\gamma_V>0$ and all initial conditions $\hat V_*(0)>0$. 
\end{proposition}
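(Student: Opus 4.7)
The dynamics \eqref{mpp_est} is a scalar autonomous ODE on the positive half-line, so global asymptotic stability of $V_*$ reduces to establishing three facts: (i) $V_*$ is the unique zero of $H$ in $(0,\infty)$; (ii) $H(V)>0$ for $0<V<V_*$ and $H(V)<0$ for $V>V_*$; (iii) the flow keeps $\hat V_*(t)$ in $(0,\infty)$. All three follow at once from strict monotonicity of $H$ together with sign information at the two endpoints, so my plan is to establish those ingredients and then close with a quadratic Lyapunov argument.

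The endpoint evaluations are immediate: since $g(0)=0$ from \eqref{I_MPP}, one has $H(0)=b_1-b_2=a_1/(1+a_4 a_5)>0$, while as $V\to\infty$ the exponential factor in \eqref{H} dominates the linear term $-2b_3 V$, forcing $H(V)\to -\infty$. The technical core is monotonicity. Differentiating $H(V)=b_1-2b_3 V-b_2(1+a_3 V)E(V)$ with $E(V):=e^{a_3(V+a_4 g(V))}$ yields
\[
H'(V) = -2b_3 - a_3 b_2 E(V)\bigl[1+(1+a_3 V)(1+a_4 g'(V))\bigr],
\]
so it suffices to show $(1+a_3 V)(1+a_4 g'(V))\ge 0$. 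Substituting the explicit form $g(V)=V(a_3 b_1+b_3-a_3 b_3 V)/(1+a_3 V)$ and its derivative, this reduces to the polynomial inequality
\[
\bigl(1+a_4(a_3 b_1+b_3)\bigr)+2 a_3(1-a_4 b_3)V+a_3^2(1-a_4 b_3)V^2 \ge 0,
\]
whose coefficients are manifestly positive once one observes that the parameterization \eqref{b_i} forces $a_4 b_3 = a_4 a_5/(1+a_4 a_5)<1$. Hence $H'(V)\le -2b_3<0$ for every $V\ge 0$. This is the concrete manifestation of the concavity hinted at in the proposition: the auxiliary power function $V\mapsto V F(V,I_*)$ is strictly concave in $V$ by a short direct computation, and this is the geometric fact underpinning the decreasing monotonicity of $H$ around $V_*$.

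With monotonicity plus $H(0)>0>H(+\infty)$, the intermediate value theorem identifies $V_*$ as the unique positive zero of $H$, and $H$ has opposite signs on either side of it. The Lyapunov function $W(\hat V_*):=\tfrac12(\hat V_*-V_*)^2$ then satisfies $\dot W=\gamma_V(\hat V_*-V_*)H(\hat V_*)<0$ off equilibrium, giving global asymptotic stability; positivity of $\hat V_*(t)$ is preserved because $H(0)>0$ makes the origin a repelling boundary, and in fact trajectories are monotone and bounded by elementary inspection. I expect the polynomial-inequality step that secures monotonicity of $H$ to be the only real obstacle; the endpoint signs, uniqueness of the zero, and the final Lyapunov estimate are routine.
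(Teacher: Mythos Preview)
Your proof is correct and follows essentially the same route as the paper: both arguments hinge on showing that $H$ is strictly decreasing on $(0,\infty)$ via a direct computation of $H'(V)$, and both rely on the same key identity $1-a_4b_3=1/(1+a_4a_5)>0$ to make every term negative. Your version is in fact a bit more complete than the paper's, since you explicitly verify the endpoint signs $H(0)>0$, $H(+\infty)=-\infty$, deduce uniqueness of $V_*$, and close with a quadratic Lyapunov function, whereas the paper simply asserts that monotonicity plus $H(V_*)=0$ yields the conclusion; your polynomial-inequality reformulation is a slight detour (the paper observes more directly that $1+a_4g'(V)>0$ as a sum of two positive terms), but the substance is identical.
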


\begin{proof}
The gist of the proof is to show that the function $H(\cdot)$ is {\em strictly decreasing} in the positive half line. This, together with the fact that $H(V_*)=0$, ensures that the differential equation \eqref{mpp_est} has an asymptotically stable equilibrium at $V_*$ that attracts all trajectories starting with  $\hat V_*(0)>0$. 

To prove the monotonicity lets compute the derivative of \eqref{H}  
\begin{align}
{dH(\hat V_*)\over d\hat V_{*}}&=
- b_2a_3\,e^{a_3\,\left(\hat V_{*}+a_4\,\left({2b_3\,\hat V_{*}-b_1\over 1+a_3\hat V_{*}}+b_1-b_3\,\hat V_{*}\right)\right)}\nonumber\\
&\quad - b_2a_3(1+a_3\hat V_{*})\,e^{a_3\,\left(\hat V_{*}+a_4\,\left({2b_3\,\hat V_{*}-b_1\over 1+a_3\hat V_{*}}+b_1-b_3\,\hat V_{*}\right)\right)}\times\nonumber\\&\quad\times\left(1-a_4b_3+a_4{2b_3(1+a_3\hat V_{*})-a_3(2b_3\hat V_{*}-b_1)\over (1+a_3\hat V_{*})^2}\right) -2b_3\nonumber\\
&=
- b_2a_3\,e^{a_3\,\left(\hat V_{*}+a_4\,\left({2b_3\,\hat V_{*}-b_1\over 1+a_3\hat V_{*}}+b_1-b_3\,\hat V_{*}\right)\right)}\nonumber\\
&\quad - b_2a_3(1+a_3\hat V_{*})\,e^{a_3\,\left(\hat V_{*}+a_4\,\left({2b_3\,\hat V_{*}-b_1\over 1+a_3\hat V_{*}}+b_1-b_3\,\hat V_{*}\right)\right)}\left({1\over1+a_4a_5}
+a_4{2b_3+a_3b_1\over (1+a_3\hat V_{*})^2}\right) \nonumber\\&\quad-2b_3<0,
\end{align}
where the inequality holds for all $\hat V_{*}>0$. This completes the proof.
\end{proof}

\begrem
In its adaptive implementation the function $\hat H(\hat V_{*})$ in \eqref{mpp_est} is evaluated using the estimates of the parameters ($\hat b_1$, $\hat b_2$, $\hat b_3$, $\hat a_3$, $\hat a_4$) computed with the DREM algorithm proposed in Section \ref{sec5}.
\endrem

\section{Simulation Results}
\lab{sec8}

To assess the performance of the proposed estimators realistic simulations were done for a PV system considered in \cite{carrasco2014, tian2012}, whose parameters are: $I_{irr,ref}=2.4207$, $I_{0,ref}=1.996\times 10^{-8}$, $R_{S,ref}=1.526\times 10^{-2}$, $R_{P,ref}=6.4616$, 
$n_{ref}=1.1287$, $N_S=1440$, $N_P=400$, $\alpha^\prime_T=0.01$. 

\break
The relation between these parameters, the reference temperature and irradiance  and the vector $a$ of the IV characteristic \eqref{I} is \cite{tian2012, carrasco2014}
\begin{align}
a_1&=N_PI_{irr,ref}\left({G\over G_{ref}}\right)(1+\alpha^\prime_T(T-T_{ref})),\\
a_2&=N_PI_{0,ref}\left({T\over T{ref}}\right)^3exp\left({E_{g,ref}q\over kT_{ref}}-{E_{g}q\over kT}\right),\\
a_3&={q\over N_Sn_{ref}kT},\quad
a_4=R_{S,ref}{N_S\over N_P},\quad
a_5={N_P\over N_SR_{P,ref}}\left({G_{ref}\over G}\right),
\end{align}
where
$$
E_g=E_g(T)=1.16-4.73\times10^{-4}\times{T\over T+636}~(eV),\quad E_{g,ref}=E_g(T_{ref}).
$$

For the reference temperature $T=308.82^{\circ}C$ and irradiance $G=967.71~W/m^2$ these parameters become $a_1 = 726.21$, $a_2 = 5.9880\times 10^{-6}$, $a_3 = 0.0231$, $a_4 = 0.0732$, $a_5 = 0.0322$. The MPP voltage is $V_{*}=635.2~V$.

The estimated parameters $\hat a$ were generated combining the estimator  of the parameters $\theta$ given in \eqref{decest} with the mapping $\calg$ \eqref{G_a1}--\eqref{G_a345} of Proposition \ref{pro6.1}. 
Since $\calg$ has a potential division by zero a standard projection must be added to avoid singularities---this was, however, not needed in our case. For the filters \eqref{xi_123}, \eqref{xi_45} we use $\lambda=100$. To construct the regressor matrix $M_e$ in \eqref{yama} the delay signals \eqref{d_operator} were computed with $d_1=0.1$, $d_2=0.2$, $d_3=0.3$, $d_4=0.4$ and the scaling factor $\beta=1.25\times 10^{-3}$.

The control signal was taken as 
$$
u(t)=0.8+0.1\,\sin(3t)+0.1\,\sin(4t),
$$
which is ``rich enough" to verify the excitation condition $\Delta(t) \notin {\call}_2$ of Proposition \ref{pro5.1}. 

In Fig. \ref{fig4} the profiles of current, voltage and control  are shown. 
Fig. \ref{fig6} shows the evolution of the resulting parameter and MPP voltage estimation errors for different initial conditions.  Notice that the parameter estimates remain in a quiescent period in the time interval $t \in [0,2.5]$ and then move swiftly towards their correct value after sufficient excitation has been gathered with the delayed signals by the DREM estimator. From Fig. \ref{fig4} we see that this time interval coincides, as expected, with the peak values of current and voltage.

Similar plots are shown in Fig. \ref{fig5} for some fixed initial conditions and different adaptation gains.
As seen from the figures increasing $\gamma$ reduces the parameter convergence time, with the same role played by $\gamma_V$ on the estimation of the MPP voltage---a scenario consistent with the theory.  The choice of the parameter $\beta$ was quite critical and done on a trial-and-error basis to improve the condition number of the matrix $M_e$.

To check the tracking capabilities of the algorithms the simulations were repeated with a {\em time-varying temperature} $T$ which starts at $T_{ref}$ and linearly increases by $4^{\circ}C$ for $100$ seconds. The results are shown in Fig. \ref{fig7}---attracting the readers attention to the different time scale. Interestingly, the time variations of the true parameters induces excitation radically changing the transient performance of the parameter estimates. Notice also from Fig. \ref{fig7} (c) that, although the MPP voltage changes, the estimate $\hat V_*$ perfectly tracks this variation.   

Finally, Fig. \ref{fig8} shows the case where the  temperature $T$ linearly increases by $6^{\circ}C$  {\em and the irradiance $G$ increases by $5~W/m^2$} for $100$ seconds. Both starting from their reference values. The same behaviour pattern as in the previous scenario is repeated.

\section{Conclusions}
\lab{sec9}

Using the DREM technique a solution to the problem of on--line tracking of the MPP voltage of a PV system was proposed. The task is translated into a problem of parameter estimation of the classical five--parameter model of the system's IV characteristic. Given the large number of unknown parameters, its highly nonlinear dependence on the system dynamics and the scarce availability of measurement this is a complicated task that requires strict excitation conditions. The use of DREM allows us to relax the latter. Moreover, with the introduction of a suitable reparameterisation the nonlinear estimation problem is translated into a linear one. A numerical example with realistic simulations demonstrates that the estimator works properly under suitable excitation conditions.

As usual, the convergence analysis relies on the assumption that the parameters are constant, which implies that the temperature and the solar irradiance are constant. However, it is well known \cite{lju}, that on--line estimators are able to track slowly time--varying parameters, which is the scenario of practical interest for this application. 

Interestingly, it is shown in simulations that the time variations of the true parameters due to the temperature and irradiance changes generates additional excitation to the parameter estimator accelerating its transient behaviour. An open question of practical interest is whether the normal operation of the PV array would provide the excitation required by the estimator and, if not, how to inject small amplitude signals to the control that would ensure this condition without affecting the regulation quality. This situation is similar to the extremum seeking algorithms that rely on generating an oscillating regime around the MPP.    

%\newpage

\begin{figure*}[ht]
\centering
	\subcaptionbox{\label{fig61} $\|a-\hat a(t)\|$}{\includegraphics[width=0.32\textwidth]{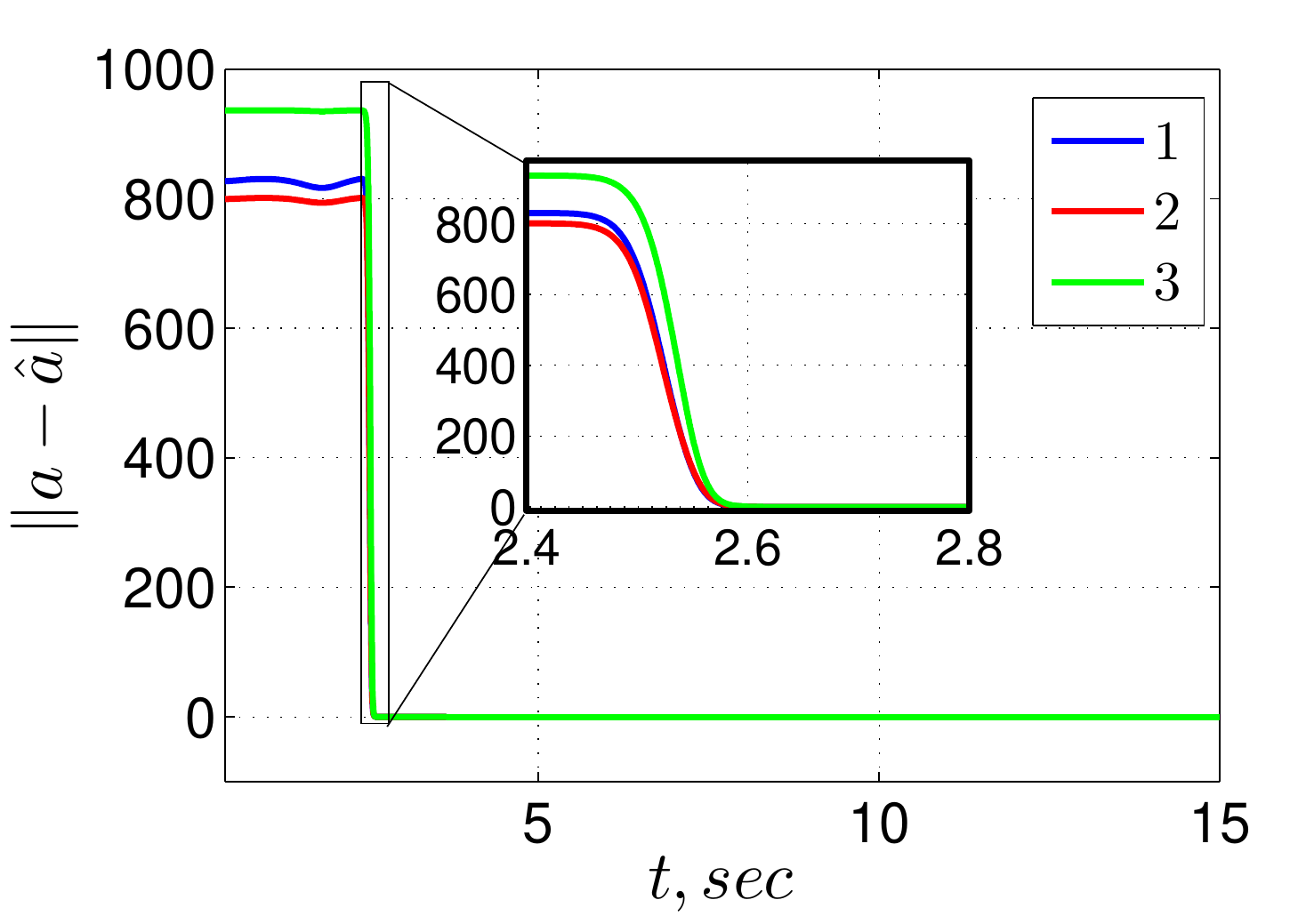}}
	\
	\subcaptionbox{\label{fig62} $V_*-\hat V_{*}(t)$}{\includegraphics[width=0.32\textwidth]{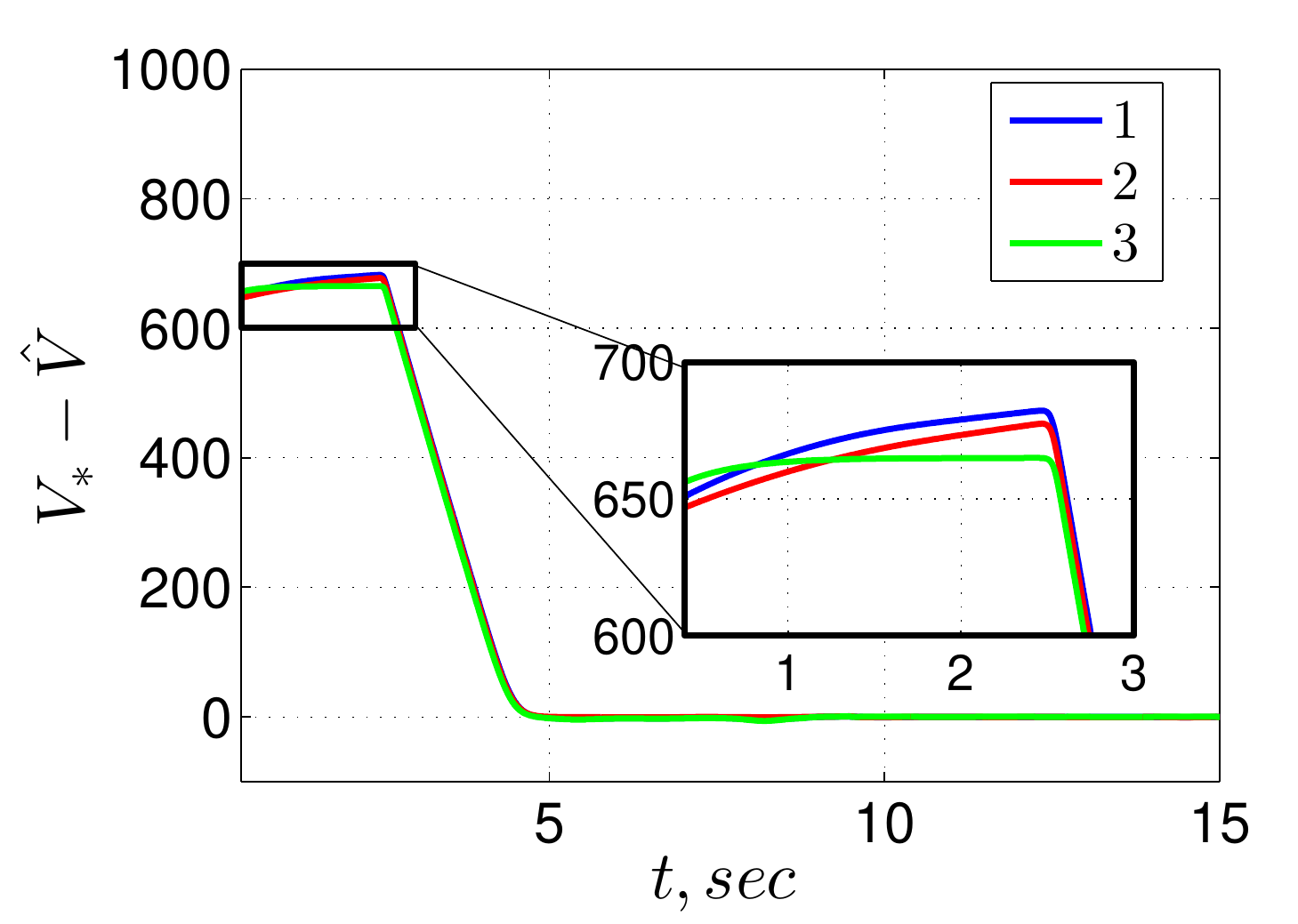}}
	\
	\subcaptionbox{\label{fig63} $V_*-\hat V_{*}(t)$}{\includegraphics[width=0.32\textwidth]{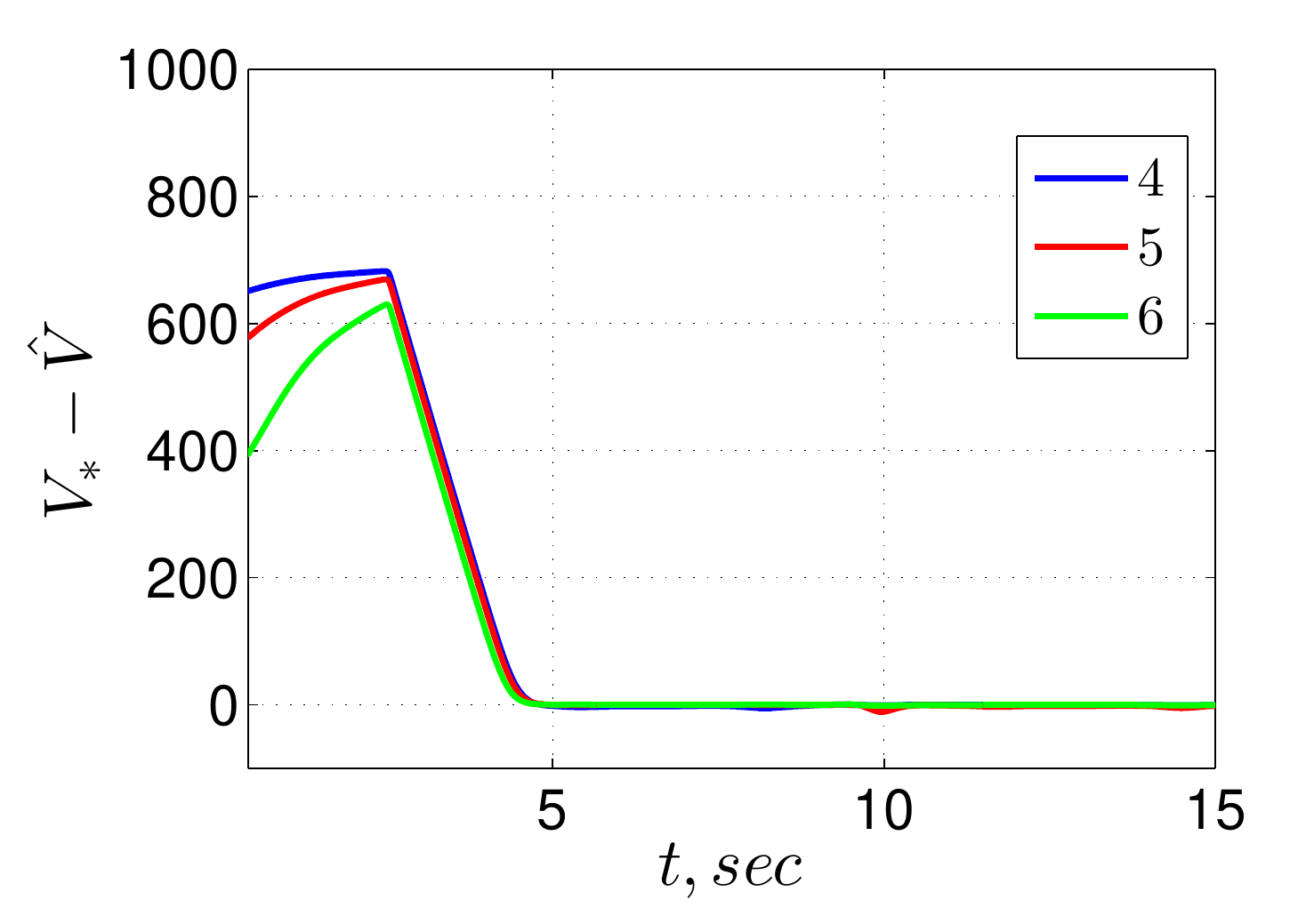}}

	\caption{\label{fig6} Estimation errors with  $\gamma=20$, $\gamma_V=0.5$ and different initial conditions:\\ 
	$1.$ (blue line) $\hat\theta(0)=(0.01,0.006,0.009,0.001)$, $\hat V_{*}(0)=0$;
	\\
$2.$ (red line) $\hat\theta(0)=(0.01,0.004,0.006,0.002)$, $\hat V_{*}(0)=0$;\\ 
$3.$ (green line) $\hat\theta(0)=(0.02,0.04,0.06,0.001)$, $\hat V_{*}(0)=0$;\\
$4.$ (blue line) $\hat\theta(0)=(0.01,0.006,0.009,0.001)$, $\hat V_{*}(0)=0$;
	\\
$5.$ (red line) $\hat\theta(0)=(0.01,0.006,0.009,0.001)$, $\hat V_{*}(0)=100$;\\ 
$6.$ (green line) $\hat\theta(0)=(0.01,0.006,0.009,0.001)$, $\hat V_{*}(0)=300$.}
\end{figure*}

\begin{figure*}[ht]
\centering
	\subcaptionbox{\label{fig51} $\|a-\hat a(t)\|$}{\includegraphics[width=0.32\textwidth]{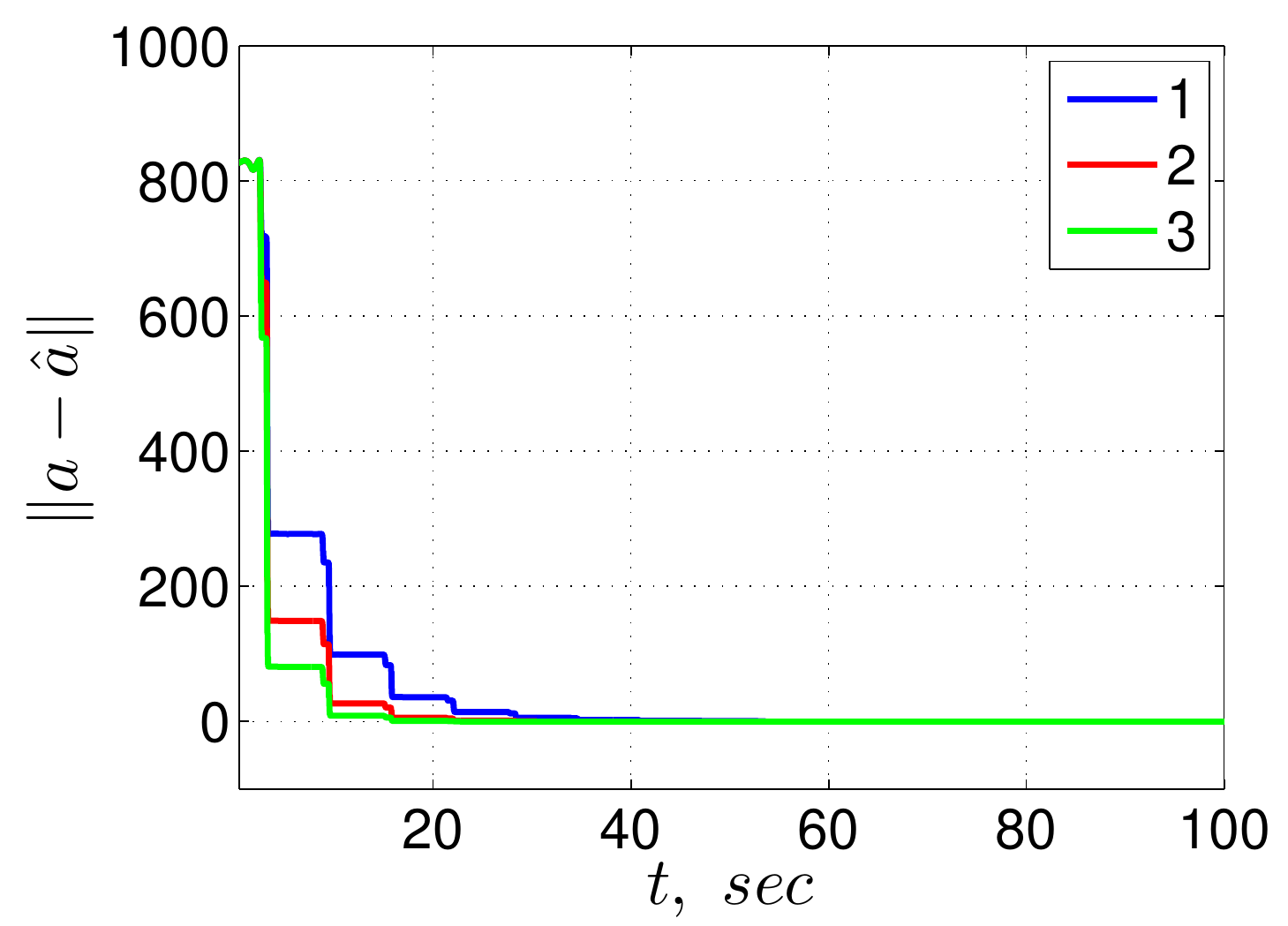}}
	\
	\subcaptionbox{\label{fig52} $V_*-\hat V_{*}(t)$}{\includegraphics[width=0.32\textwidth]{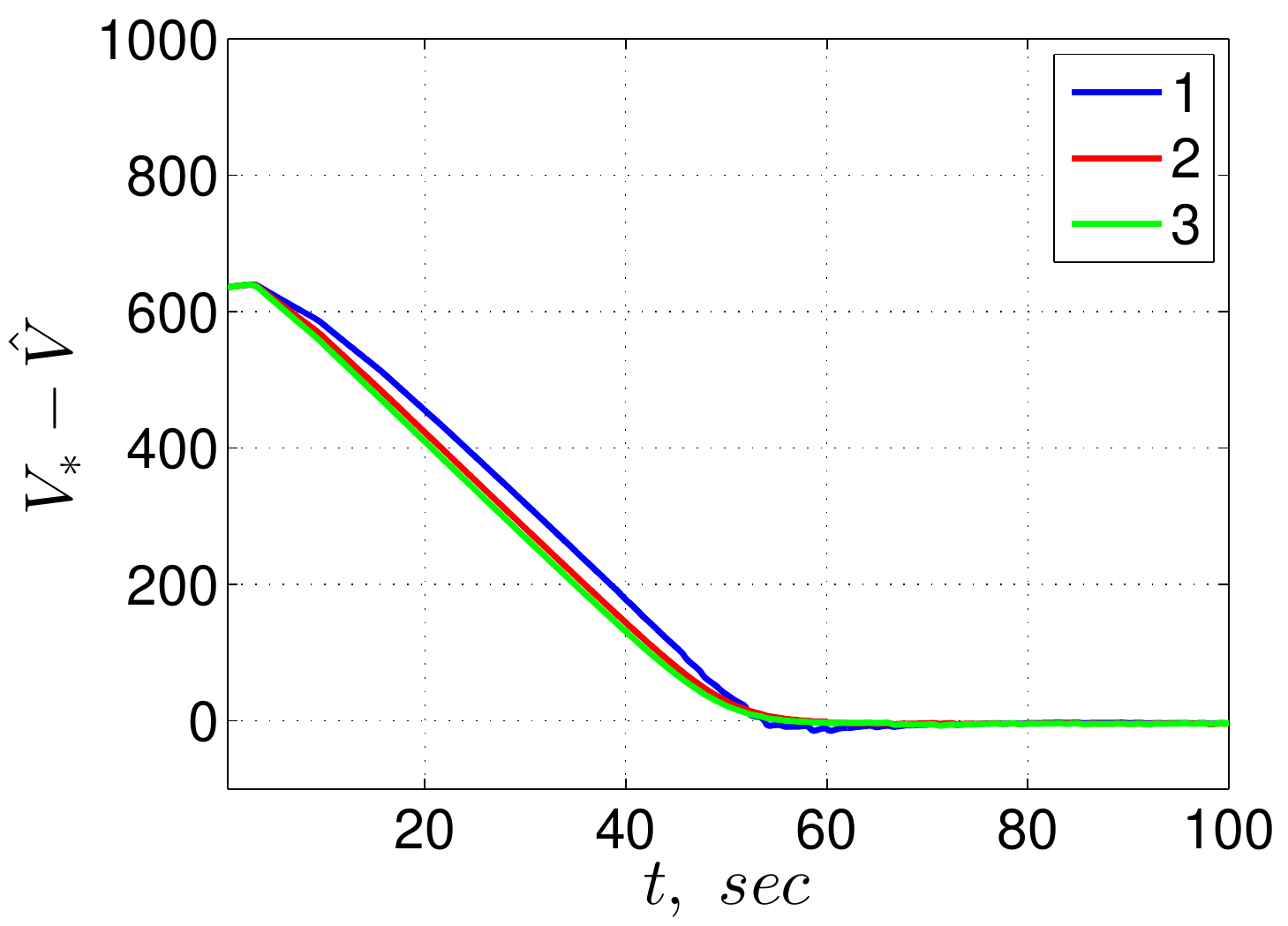}}
	\
	\subcaptionbox{\label{fig53} $V_*-\hat V_{*}(t)$}{\includegraphics[width=0.32\textwidth]{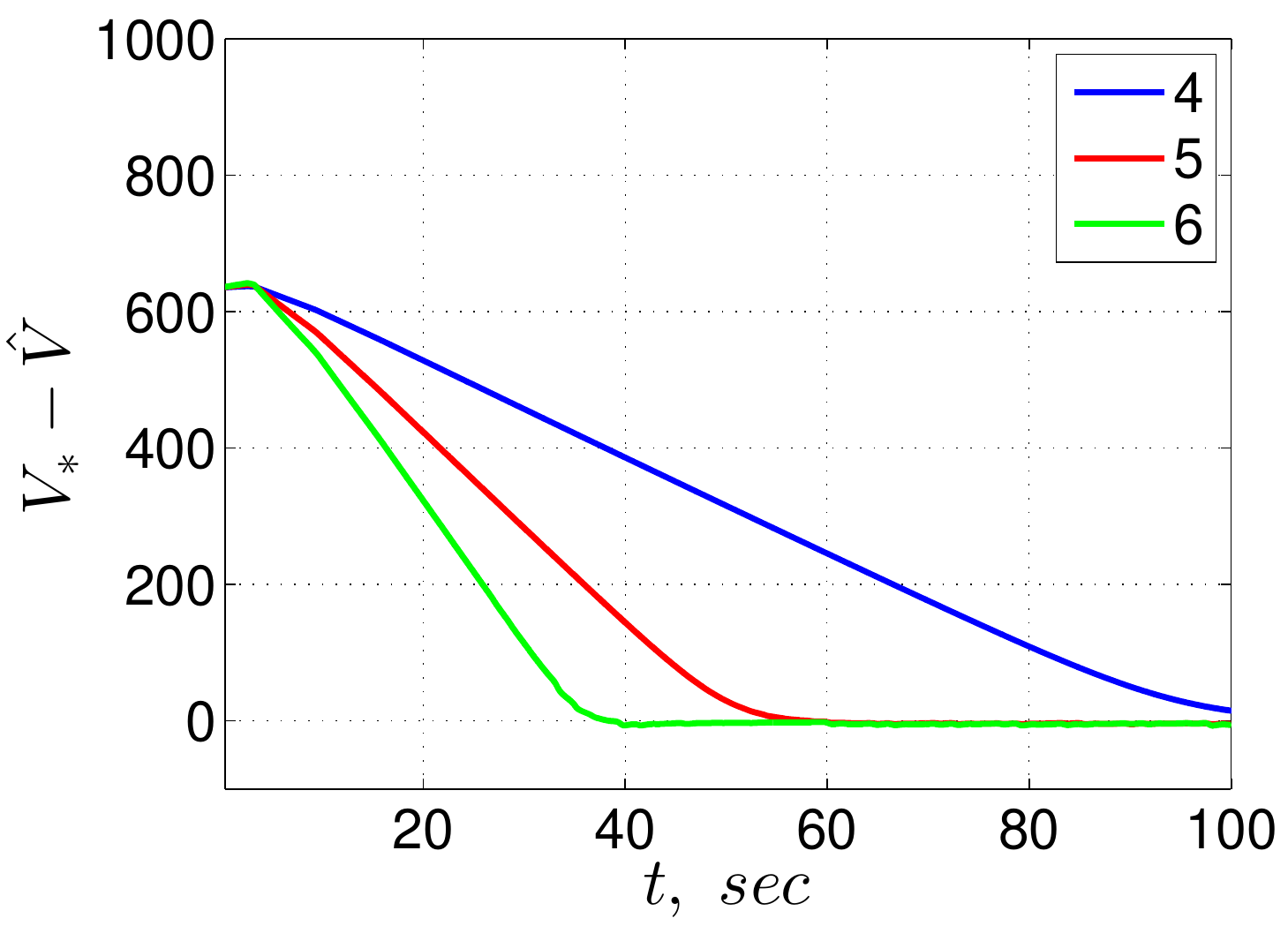}}
	
	\caption{\label{fig5} Estimation errors with initial conditions $\hat\theta(0)=(0.01,0.004,0.006,0.002)$ and different adaptation gains: \\ 
	$1.$ (blue line) $\gamma=0.3$, $\gamma_V=0.02$;\\ 
	$2.$ (red line) $\gamma=0.5$, $\gamma_V=0.02$;\\ 
	$3.$ (green line) $\gamma=0.7$, $\gamma_V=0.02$; \\
	$4.$ (blue line) $\gamma=0.5$, $\gamma_V=0.01$;\\ 
	$5.$ (red line) $\gamma=0.5$, $\gamma_V=0.02$;\\ 
	$6.$ (green line) $\gamma=0.5$, $\gamma_V=0.03$.}
\end{figure*}

\begin{figure*}[ht]
\centering
	\subcaptionbox{\label{fig71} $\|a-\hat a(t)\|$ }{\includegraphics[width=0.32\textwidth]{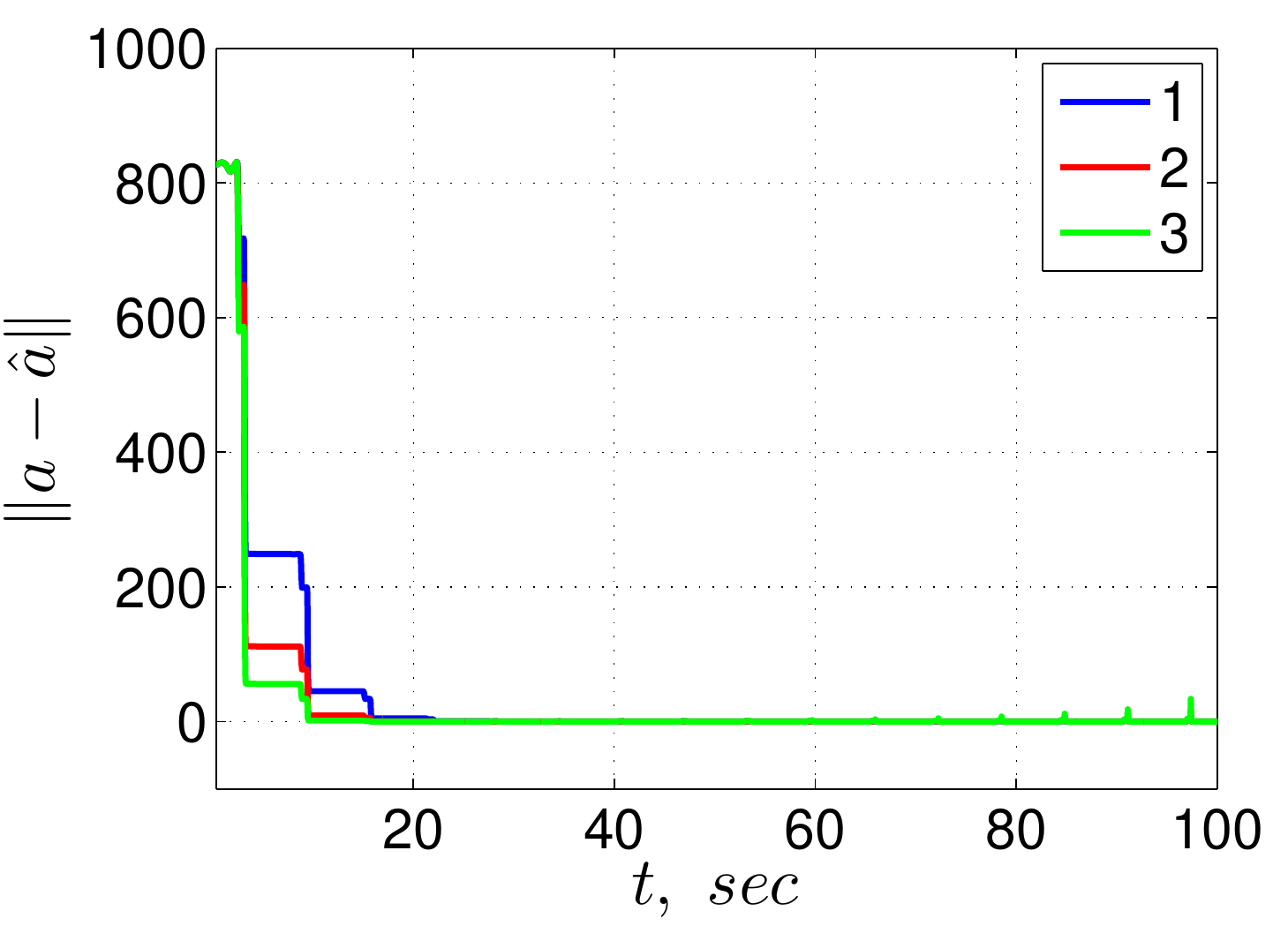}}
	\
	\subcaptionbox{\label{fig72} $V_*-\hat V_{*}(t)$ }{\includegraphics[width=0.32\textwidth]{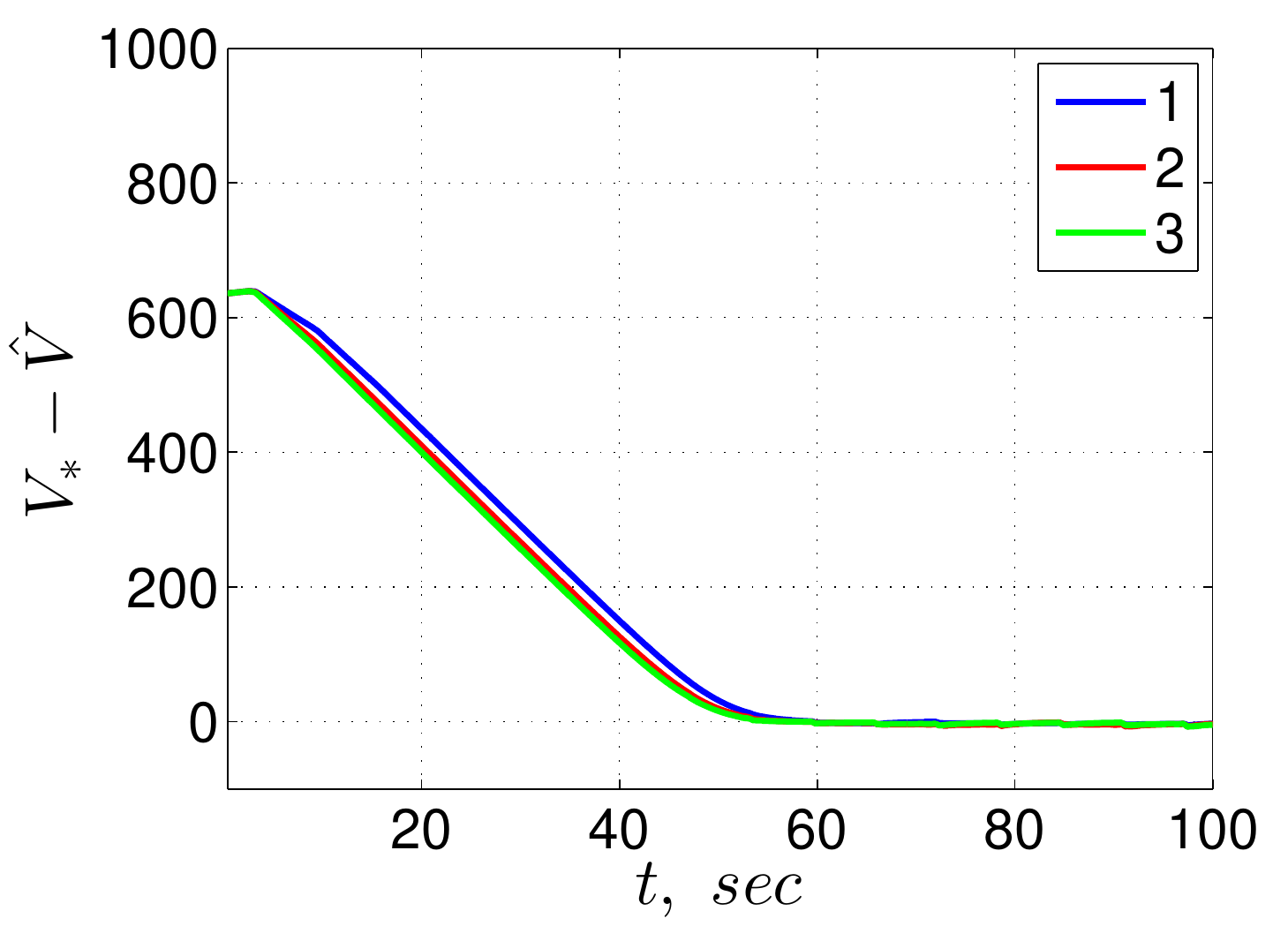}}
	\
	\subcaptionbox{\label{fig73} MPP voltage tracking}{\includegraphics[width=0.32\textwidth]{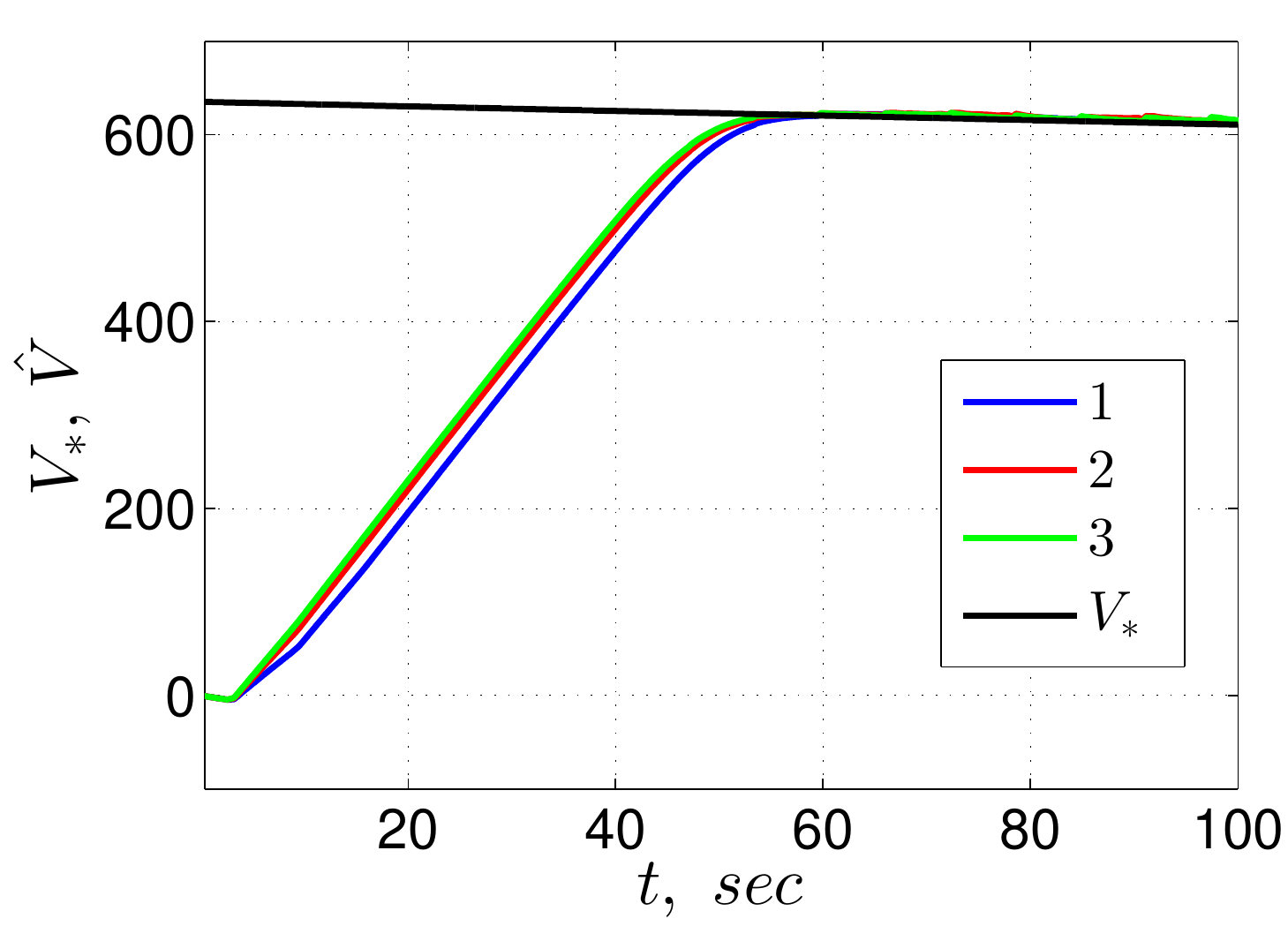}}
	
	\caption{\label{fig7} Estimation errors with $\hat\theta(0)=(0.01,0.004,0.006,0.002)$, $\gamma_V=0.02$ and different adaptation gains for time--varying temperature. \\ 
	$1.$ (blue line) $\gamma=0.3$;\\ 
	$2.$ (red line) $\gamma=0.5$;\\ 
	$3.$ (green line) $\gamma=0.7$.}
\end{figure*}

\begin{figure*}[ht]
\centering
	\subcaptionbox{\label{fig81}  $\|a-\hat a(t)\|$ }{\includegraphics[width=0.32\textwidth]{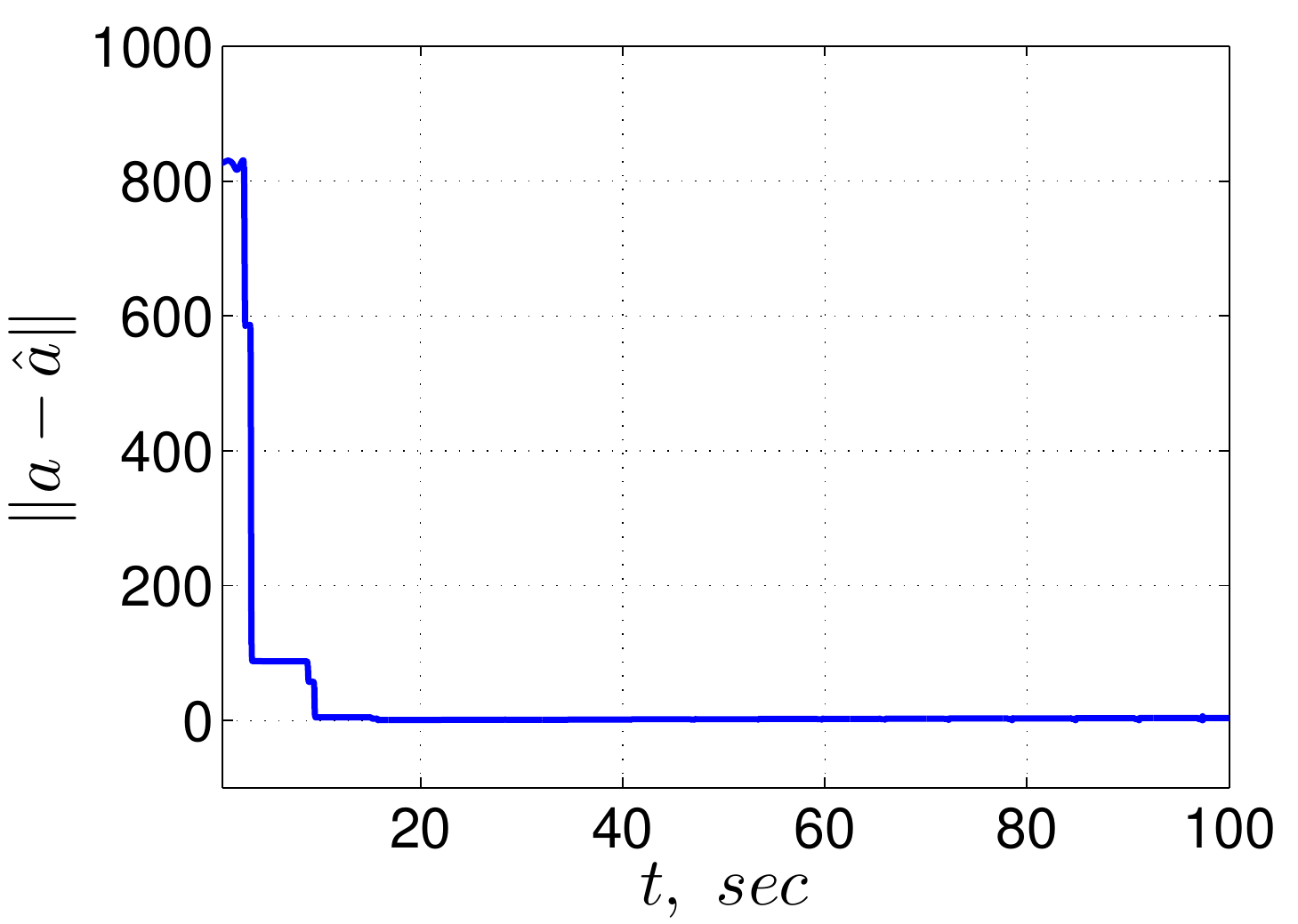}}
	\
	\subcaptionbox{\label{fig82}  $V_{*}-\hat V_{*}(t)$}{\includegraphics[width=0.32\textwidth]{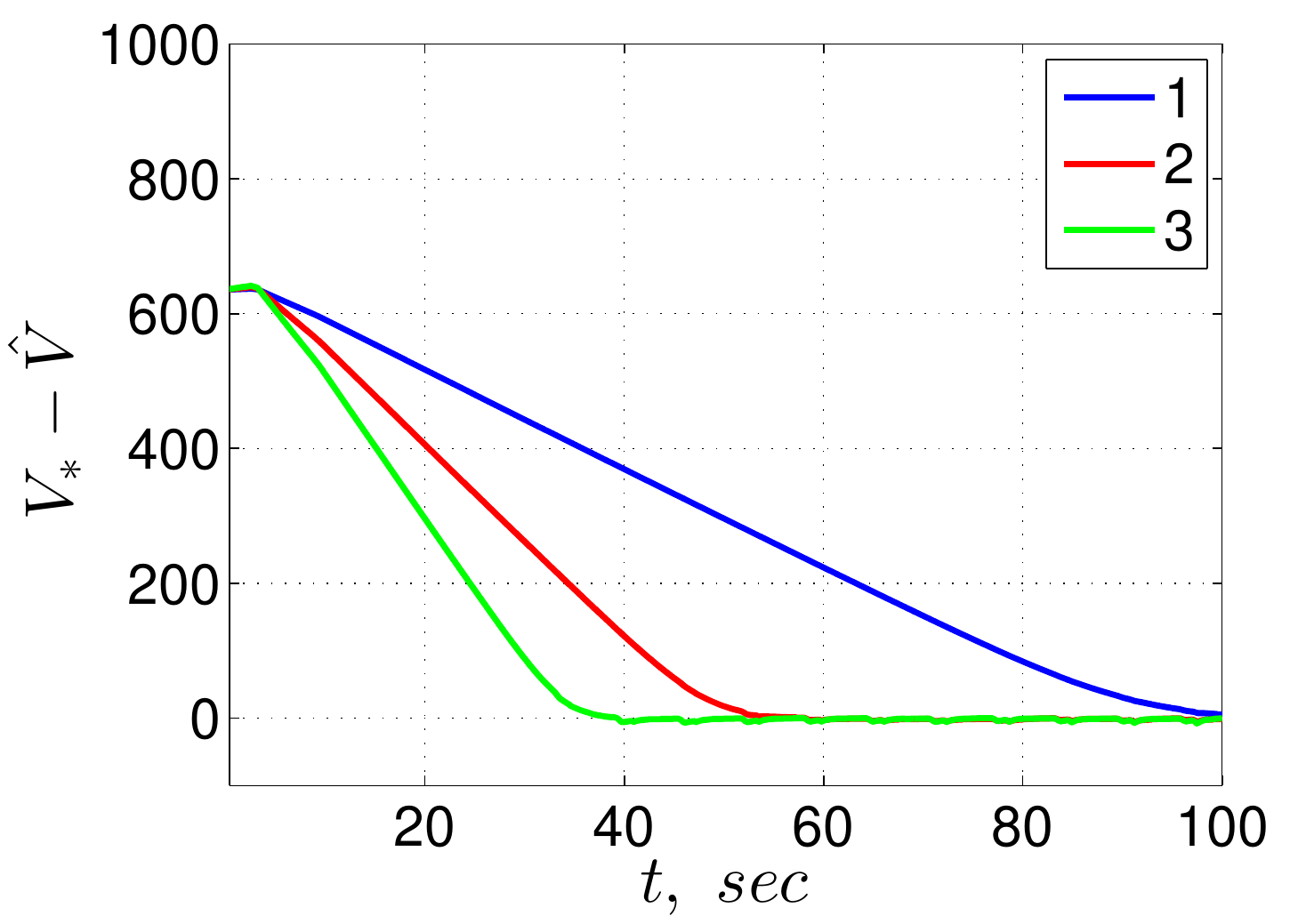}}
	\
	\subcaptionbox{\label{fig83} MPP voltage tracking}{\includegraphics[width=0.32\textwidth]{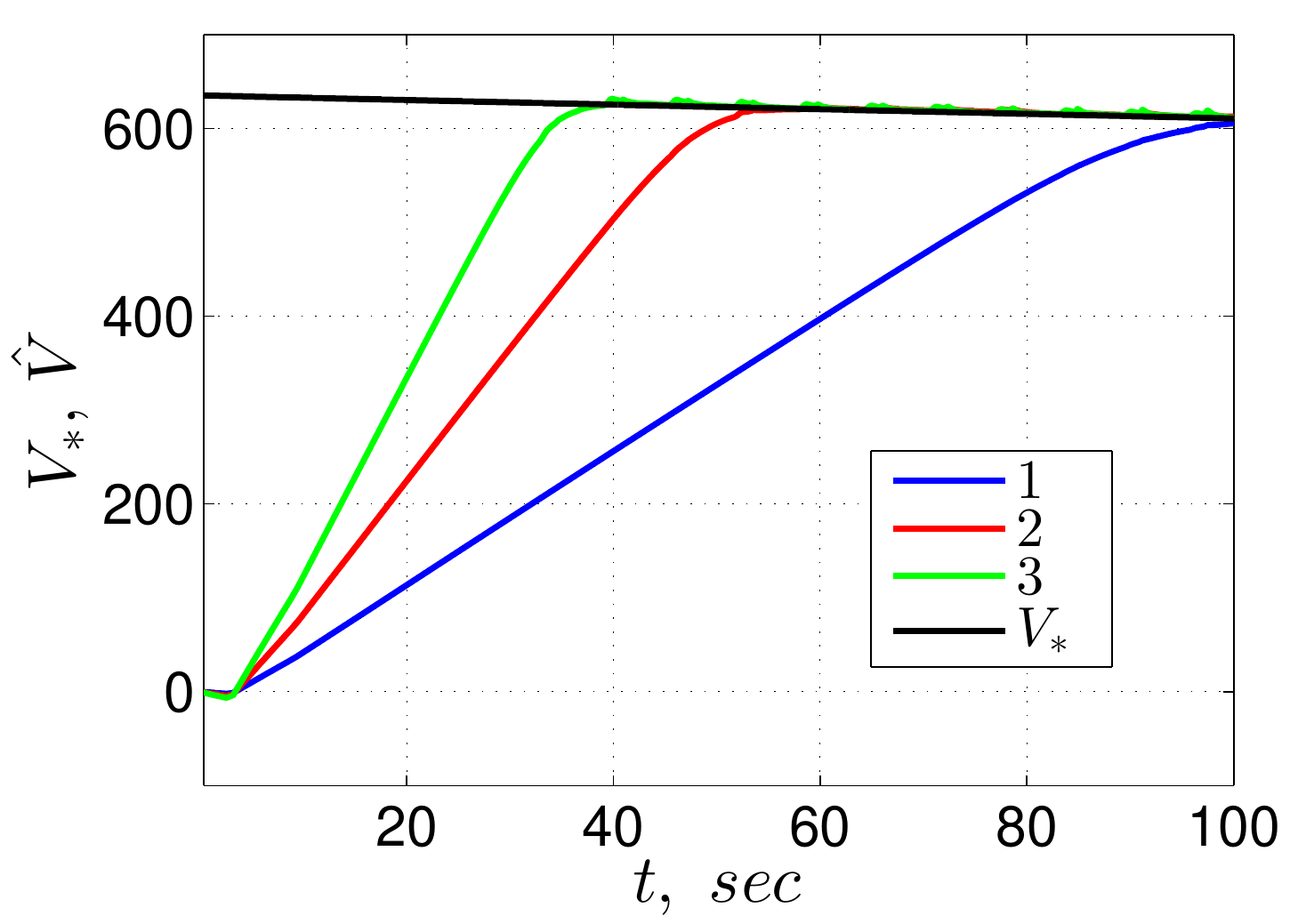}}
	
	\caption{\label{fig8} Estimation errors with $\hat\theta(0)=(0.01,0.004,0.006,0.002)$, $\gamma=0.5$ and different adaptation gains  for time--varying temperature and irradiance. \\ $1.$ (blue line) $\gamma_V=0.01$;\\ $2.$ (red line) $\gamma_V=0.02$;\\ $3.$ (green line) $\gamma_V=0.03$.}
\end{figure*}

\end{document}